\newcommand{\rrvert}{\vert}
\newcommand{\llvert}{\vert}
\numberwithin{equation}{section}
\newtheorem{thm}{Theorem}[section]
\newtheorem{cor}[thm]{Corollary}
\newtheorem{prop}[thm]{Proposition}
\theoremstyle{definition} 
\newtheorem{exmp}[thm]{Example}
\newtheorem{rem}[thm]{Remark}
\begin{document}

\begin{frontmatter}
\pretitle{Research Article}

\title{Cliquet option pricing in a jump-diffusion L\'{e}vy model}

\author{\inits{M.}\fnms{Markus}~\snm{Hess}\ead[label=e1]{Markus-Hess@gmx.net}} 
\address{
Independent}



\markboth{M. Hess}{Cliquet option pricing in a jump-diffusion L\'{e}vy model}

\begin{abstract}
We investigate the pricing of cliquet options in a
jump-diffusion model. The considered option is of monthly sum cap style
while the underlying stock price model is driven by a drifted L\'{e}vy
process entailing a Brownian diffusion component as well as compound
Poisson jumps. We also derive representations for the density and
distribution function of the emerging L\'{e}vy process. In this setting, we
infer semi-analytic expressions for the cliquet option price by two
different approaches. The first one involves the probability distribution
function of the driving L\'{e}vy process whereas the second draws upon
Fourier transform techniques. With view on sensitivity analysis and hedging
purposes, we eventually deduce representations for several Greeks while
putting emphasis on the Vega.
\end{abstract}
\begin{keywords}
\kwd{Cliquet option pricing}
\kwd{path-dependent exotic option}
\kwd{equity indexed annuity}
\kwd{structured product}
\kwd{sensitivity analysis}
\kwd{Greeks}
\kwd{jump-diffusion model}
\kwd{L\'{e}vy process}
\kwd{stochastic differential equation}
\kwd{compound Poisson process}
\kwd{Fourier transform}
\kwd{distribution function}
\end{keywords}
\begin{keywords}[MSC2010]%
\kwd[Primary ]{60G10}
\kwd{60G51}
\kwd{60H10}
\kwd[; Secondary ]{91B30}
\kwd{91B70}
\end{keywords}

\received{\sday{5} \smonth{4} \syear{2018}}
\revised{\sday{14} \smonth{6} \syear{2018}}
\accepted{\sday{23} \smonth{6} \syear{2018}}

\publishedonline{\sday{20} \smonth{7} \syear{2018}\phantom{\quad}}
\end{frontmatter}

\section{Introduction}\label{sec1}

During the last decades, cliquet option based contracts became a very
popular and frequently sold investment product in the insurance industry.
These contracts can be considered as a customized subclass of equity
indexed annuities which combine savings and insurance benefits \cite{3}. The
underlying options usually are of monthly sum cap style paying a credited
yield based on the sum of monthly-capped rates associated with some
reference (stock) index. More precisely, the investor pays a contractually
specified amount to the issuer of the option prior to its maturity date
and, in turn, receives at maturity a payoff depending on the performance of
some designated reference index. In this regard, cliquet type investments
belong to the class of path-dependent exotic options. The most popular
choice in the insurance branch are cliquet contracts with globally-floored
and locally-capped payoffs. These products can be utilized to protect
against downside risk while yielding significant upside potential, yet
avoiding extreme payoffs due to their local capping (cf. \cite{3,10,16}).
In \cite{16} cliquet options are regarded as ``the height of fashion in the
world of equity derivatives''.\looseness=-1

In the literature, there are different pricing approaches for cliquet
options involving e.g. partial differential equations (see \cite{16}), Monte
Carlo techniques (see \cite{1,2,5,9}), numerical recursive algorithms
related to inverse Laplace transforms (see \cite{10}) and analytical computation
methods (see \cite{1,3,5,8,9,11}). In \cite{3} the authors provide
semi-analytic pricing formulas for path-dependent equity-linked contracts.
They distinguish between cliquet options and monthly sum cap contracts and
derive expressions for various Greeks. In their approach, it is crucial to
know the probability distribution of the returns of the underlying
reference index. In \cite{1} the author uses a L\'{e}vy process specification to
model the evolution of the underlying reference portfolio and investigates
the valuation of life insurance policies providing interest rate
guarantees. The driving L\'{e}vy process is of jump-diffusion type with
normally distributed jump amplitudes while a special focus in \cite{1} is laid
on valuation under different risk-neutral pricing measures. In \cite{11} the
valuation of insurance contracts is discussed while emphasis is put on the
impact of different L\'{e}vy process model specifications. It is shown that
changing the underlying asset model implies a significant change in the
prices of guarantees, indicating a substantial model risk. In \cite{8} the
pricing of cliquet options in a geometric Meixner model is investigated.
The considered option is of monthly sum cap style while the underlying
stock price is driven by a pure-jump Meixner--L\'{e}vy process yielding
Meixner distributed log-returns. The paper \cite{8} provides a specific
application of the results derived in the present article. In \cite{10} cliquet
option prices are computed numerically by a recursive algorithm involving
inverse Laplace transforms. This method is applied to a lognormal and a
jump-diffusion model with deterministic volatility as well as to the Heston
stochastic volatility model. In addition, a sensitivity analysis in each
model is presented. Moreover, in \cite{7} cliquet option pricing in a
jump-diffusion model with time-dependent coefficients is examined. The
jumps in the stock price trajectory are interspersed by an increasing
standard Poisson process and the time-dependent coefficients are
approximated by piecewise constant functions. In \cite{7} there are solely
cliquet options with a single resetting time discussed. In \cite{16} the author
investigates cliquet option pricing with partial differential equations
(PDEs) while putting a special focus on the important role of volatility
surface modeling. Recently, there have been extensions beyond L\'{e}vy
settings to regime switching L\'{e}vy models (see e.g. \cite{5,9}). In \cite{5}
the authors investigate the pricing of equity-linked annuities with
cliquet-style guarantees in regime-switching stochastic volatility models
with jumps. They propose a transform-based pricing method involving density
projections and continuous-time Markov chain approximations. The considered
models include exponential and regime-switching L\'{e}vy processes as well
as stochastic volatility models with general jump size distributions. In
\cite{9} the valuation of equity-linked life insurance contracts in a regime
switching L\'{e}vy model is studied. The model parameters depend on a
continuous-time finite-state Markov chain, and closed-form pricing formulas
based on Fourier transform techniques are derived.\looseness=-1

The aim of the present paper is to provide analytical pricing formulas for
globally-floored locally-capped cliquet options with multiple resetting
times where the underlying reference stock index is driven by a drifted
time-homogeneous L\'{e}vy process with Brownian diffusion component and
compound Poisson jumps. In our framework, jumps represent rare events such
as crashes, large drawdowns or upward movements. The dates of e.g. market
crashes are modeled as arrival times of a standard Poisson process while
the jump amplitudes can be both positive and negative. With reference to
Section 4.1.1 in \cite{4}, we state that jump-diffusion models are easy to
simulate and efficient Monte Carlo methods for option pricing are
available. Jump-diffusion models also perform well when it comes to implied
volatility smile interpolation (see Section 13 in \cite{4}). In our setup, we
derive cliquet option price formulas under two different approaches: once
by using the distribution function of the driving L\'{e}vy process and once
by applying Fourier transform techniques. In the context of sensitivity
analysis, we eventually provide expressions for several Greeks related to
our model.

The paper is organized as follows: In Section~\ref{sec2} we introduce the
jump-diffusion stock price model and derive representations for the
probability density and distribution function of the driving L\'{e}vy
process. In Section~\ref{sec3} we are concerned with cliquet option pricing under
both a distribution function and a Fourier transform approach. Section~\ref{sec4} is
dedicated to sensitivity analysis and the computation of different Greeks.
In Section~\ref{sec5} we draw the conclusions and briefly mention some future
research topics.

\section{A L\'{e}vy stock price model and its distributional properties}\label{sec2}

Let $ ( \Omega, \mathbb{F},  ( \mathcal{F}_{t}  )_{t\in  [
0, T  ]}, \mathbb{Q}  )$ be a filtered probability space
satisfying the usual hypotheses, i.e. $\mathcal{F}_{t} = \mathcal{F}_{t +}
:= \cap_{s > t} \mathcal{F}_{s}$ constitutes a right-continuous
filtration and $\mathbb{F}$ denotes the sigma-algebra augmented by all
$\mathbb{Q}$-null sets (cf. p. 3 in \cite{13}). Here, $\mathbb{Q}$ is a
risk-neutral probability measure and $0< T < \infty$ denotes a finite time
horizon. In the sequel, we introduce a stochastic model for the stock price
process $S_{t}$. Let $t\in  [ 0, T  ]$ and consider the stochastic
differential equation (SDE)
\begin{align*}
d S_{t} = \eta ( t, S_{t} ) \,dt + \sigma ( t, S_{t}
) \,d W_{t} + \int_{\mathbb{R}} \theta ( t, z,
S_{t-} ) \,dN ( t, z )
\end{align*}
where $\eta$, $\sigma$ and $\theta$ are deterministic functions, $W$
constitutes an $\mathcal{F}$-adapted standard Brownian motion under
$\mathbb{Q}$ and $N$ is a Poisson random measure (PRM). We further
introduce the $\mathbb{Q}$-compensated PRM
%
\begin{align}
\label{eq2.1} d \widetilde{N} ( s, z ) :=dN ( s, z ) -d\nu ( z ) \,ds
\end{align}
which constitutes an $ ( \mathcal{F}, \mathbb{Q}  )$-martingale
integrator on $ [ 0, T  ] \times \mathbb{R}$ with positive and
finite L\'{e}vy measure $\nu$ satisfying $\nu  (  \{ 0  \}
 ) =0$ and
\begin{align*}
\int_{\mathbb{R}} \bigl( 1 \wedge z^{2} \bigr) \,d\nu ( z )
< \infty
\end{align*}
(cf. Eq. (3.14) in \cite{4}). In the above setup, we refer to $\eta$, $\sigma$
and $\theta$ as the drift, volatility and jump function, respectively. We
assume that $W$ and $N$ are $\mathbb{Q}$-independent and set
\begin{align*}
\mathcal{F}_{t} :=\sigma \{ S_{u}:0 \leq u \leq t \}
\end{align*}
for all $t\in  [ 0, T  ]$. In the next step, we specify the
emerging coefficients as follows
\begin{align*}
\eta ( t, S_{t} ) :=\eta ( t ) S_{t}, \qquad \sigma ( t,
S_{t} ) :=\sigma ( t ) S_{t}, \qquad \theta ( t, z, S_{t-}
) :=\theta ( t, z ) S_{t-}
\end{align*}
while assuming that $\theta  ( t, z  ) > - 1$ for all $ ( t, z
 ) \in  [ 0, T  ] \times \mathbb{R}$ and that
\begin{align*}
\mathbb{E}_{\mathbb{Q}} \Biggl[ \int_{0}^{T}
\biggl( \bigl\llvert \eta ( t ) \bigr\rrvert + \sigma^{2} ( t ) + \int
_{\mathbb{R}} \theta^{2} ( t, z ) \,d\nu ( z ) \biggr) \,dt
\Biggr] < \infty
\end{align*}
(cf. Section 9.1 in \cite{6}). Consequently, we obtain the geometric SDE
\begin{align*}
\frac{d S_{t}} {S_{t-}} = \eta ( t ) \,dt + \sigma ( t ) \,d W_{t} + \int
_{\mathbb{R}} \theta ( t, z ) \,dN ( t, z )
\end{align*}
which possesses the discontinuous Dol\'{e}ans-Dade solution
\begin{align*}
S_{t} &= S_{0} \exp \Biggl\{ \int_{0}^{t}
\biggl( \eta ( s ) - \frac{1} {2} \sigma^{2} ( s ) \biggr) \,ds +
\int_{0}^{t} \sigma ( s ) \,d W_{s}
\\
&\quad + \int_{0}^{t} \int_{\mathbb{R}}
\ln \bigl( 1+ \theta ( s, z ) \bigr) \,dN ( s, z ) \Biggr\}
\end{align*}
for all $t\in  [ 0, T  ]$. From now on, we set $\eta  ( t
 ) \equiv \eta$, $\sigma  ( t  ) \equiv \sigma >0$ and
$\theta  ( t, z  ) := e^{z} - 1$ in order to obtain a
time-homogeneous L\'{e}vy process specification. If we do so, the latter
equation can be written as
%
\begin{align}
\label{eq2.2} S_{t} = S_{0} e^{X_{t}}
\end{align}
with a real-valued L\'{e}vy process
%
\begin{align}
\label{eq2.3} X_{t} :=\gamma t + \sigma W_{t} + \int
_{0}^{t} \int_{\mathbb{R}} z \,dN ( s, z )
\end{align}
where $\gamma:=\eta- {\sigma^{2}} / {2}$ and $t\in  [ 0, T
 ]$. Note that $X_{0} =0$ $\mathbb{Q}$-a.s. We denote the characteristic
triplet of $X$ by $ ( \gamma, \sigma, \nu  )$. Moreover, the first
moment of $X$ is given by
\begin{align*}
\mathbb{E}_{\mathbb{Q}} [ X_{t} ] = t \biggl( \gamma + \int
_{\mathbb{R}} z \,d\nu ( z ) \biggr)
\end{align*}
whereas the characteristic function of $X$ can be computed by the
L\'{e}vy--Khinchin formula (see e.g. \cite{4,6,14,15}) due to
%
\begin{align}
\label{eq2.4} \phi_{X_{t}} ( u ) := \mathbb{E}_{\mathbb{Q}} \bigl[
e^{iu X_{t}} \bigr] = e^{\psi  ( u  ) t}
\end{align}
with $i^{2} = - 1$, $u \in \mathbb{R}$, $t\in  [ 0, T  ]$ and
characteristic exponent
%
\begin{align}
\label{eq2.5} \psi ( u ) :=iu\gamma- \frac{1} {2} \sigma^{2}
u^{2} + \int_{\mathbb{R}} \bigl[ e^{iuz} - 1 \bigr]
\,d\nu ( z ).
\end{align}
More details on L\'{e}vy processes can be found in e.g. \cite{4,6,14,15}. In the next step, we define the discounted stock price
\begin{align*}
\hat{S}_{t} := \frac{S_{t}} {B_{t}}\vadjust{\goodbreak}
\end{align*}
where $S_{t}$ is such as defined in (\ref{eq2.2}) and $B_{t} := e^{rt}$ is
the value of a bank account with normalized initial capital $B_{0} =1$ and
risk-less interest rate $r>0$. Due to (\ref{eq2.2}), we find
\begin{align*}
\hat{S}_{t} = S_{0} e^{X_{t} - rt}
\end{align*}
while It\^{o}'s formula yields the following geometric SDE under
$\mathbb{Q}$
\begin{align*}
\frac{d \hat{S}_{t}} {\hat{S}_{t-}} = \biggl( \eta-r + \int_{\mathbb{R}} \bigl[
e^{z} - 1 \bigr] \,d\nu ( z ) \biggr) \,dt + \sigma \,d W_{t} +
\int_{\mathbb{R}} \bigl[ e^{z} - 1 \bigr] \,d \widetilde{N} ( t, z
).
\end{align*}
In accordance to no-arbitrage theory, the discounted stock price process
$\hat{S}_{t}$ must form a martingale under the risk-neutral probability
measure $\mathbb{Q}$. For this reason, we have to require the drift
restriction
\begin{align*}
\eta = r- \int_{\mathbb{R}} \bigl[ e^{z} - 1 \bigr] \,d\nu (
z ).
\end{align*}
With this particular choice of the drift coefficient $\eta$, we obtain
\begin{align*}
\frac{d S_{t}} {S_{t-}} = r \,dt + \sigma \,d W_{t} + \int_{\mathbb{R}}
\bigl[ e^{z} - 1 \bigr] \,d \widetilde{N} ( t, z )
\end{align*}
under $\mathbb{Q}$. Summing up, if we model the stock price process $S_{t}$
as in the latter equation, then the discounted stock price $\hat{S}_{t}$
constitutes a $\mathbb{Q}$-martingale.

Furthermore, let us define the Fourier transform, respectively inverse
Fourier transform, of a function $q\in \mathcal{L}^{1}  ( \mathbb{R}
 )$ via
\begin{align*}
\hat{q} ( y ) := \int_{\mathbb{R}} q ( x ) e^{iyx} \,dx,
\qquad q ( x ) = \frac{1} {2 \pi} \int_{\mathbb{R}} \hat{q} ( y )
e^{-iyx} \,dy.
\end{align*}

\begin{prop}[density function]
\label{prop2.1}
{Suppose that the
L\'{e}vy process} $X_{t}$ {is such as defined in} (\ref{eq2.3}).
{Then for all} $t\in  [ 0, T  ]$ {and} $x
\in \mathbb{R}$ {the probability density function} $f_{X_{t}}  (
x  )$ of $X_{t}$ {under} $\mathbb{Q}$ {can be represented
as}
%
\begin{align}
\label{eq2.6} f_{X_{t}} ( x ) = \frac{1} {2 \pi} \int_{\mathbb{R}}
\exp \biggl\{ -iux + t \biggl( iu\gamma- \frac{1} {2} \sigma^{2}
u^{2} + \int_{\mathbb{R}} \bigl[ e^{iuz} - 1 \bigr]
\,d\nu ( z ) \biggr) \biggr\} \,du.
\end{align}
\end{prop}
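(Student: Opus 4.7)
The plan is to obtain \eqref{eq2.6} as a direct application of the Fourier inversion theorem to the characteristic function $\phi_{X_t}$ already computed in \eqref{eq2.4}--\eqref{eq2.5}. By definition, $\phi_{X_t}(u) = \mathbb{E}_{\mathbb{Q}}[e^{iu X_t}]$ coincides, up to the sign convention fixed just before the statement, with the Fourier transform $\hat{f}_{X_t}(u)$ of the (yet to be justified) density $f_{X_t}$. So once I know $\phi_{X_t}\in\mathcal{L}^1(\mathbb{R})$, Fourier inversion will produce a bounded continuous function whose integral equals that on the right-hand side of \eqref{eq2.6}, and this function must be the density of $X_t$.

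First I would establish the integrability of $\phi_{X_t}$ in $u$. Taking real parts in \eqref{eq2.5} and using that $\cos(uz)-1\le 0$, one has
\begin{align*}
\Re\,\psi(u) \;=\; -\tfrac{1}{2}\sigma^{2} u^{2} + \int_{\mathbb{R}} \bigl[\cos(uz)-1\bigr]\,d\nu(z) \;\le\; -\tfrac{1}{2}\sigma^{2} u^{2},
\end{align*}
so that $|\phi_{X_t}(u)| = e^{t\,\Re\psi(u)} \le e^{-t\sigma^{2} u^{2}/2}$. Since the paper assumes $\sigma>0$, this Gaussian majorant is integrable on $\mathbb{R}$ for every $t\in(0,T]$, which gives $\phi_{X_t}\in\mathcal{L}^1(\mathbb{R})\cap\mathcal{L}^2(\mathbb{R})$.

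Next I would invoke the Fourier inversion theorem: for any $t\in(0,T]$, the bounded continuous function
\begin{align*}
g_t(x) \;:=\; \frac{1}{2\pi}\int_{\mathbb{R}} e^{-iux}\,\phi_{X_t}(u)\,du
\end{align*}
is a well-defined element of $\mathcal{L}^1(\mathbb{R})$ whose Fourier transform equals $\phi_{X_t}$. By uniqueness of characteristic functions, $g_t$ is therefore the density $f_{X_t}$ of $X_t$. Substituting the explicit expression $\phi_{X_t}(u) = e^{t\psi(u)}$ from \eqref{eq2.4}--\eqref{eq2.5} into this integral yields exactly formula \eqref{eq2.6}.

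The only genuine obstacle is verifying that $X_t$ admits a density in the first place, but the Gaussian bound above takes care of this for free: absolute integrability of $\phi_{X_t}$ forces absolute continuity of the law of $X_t$ with respect to Lebesgue measure, with a bounded continuous density given by the inverse Fourier transform. The degenerate case $t=0$, where $X_0=0$ $\mathbb{Q}$-a.s., falls outside the density framework and should either be excluded or interpreted in the distributional sense as a Dirac mass at the origin.
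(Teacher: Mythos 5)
Your proof is correct and follows essentially the same route as the paper: identify $\phi_{X_t}$ as the Fourier transform of the density and invert, substituting the L\'{e}vy--Khinchin exponent \eqref{eq2.5}. The Gaussian bound $\llvert \phi_{X_t}(u)\rrvert \le e^{-t\sigma^{2}u^{2}/2}$ that you add (using $\sigma>0$ and $\cos(uz)-1\le 0$) is a welcome justification of the integrability and of the existence of a density, which the paper's proof takes for granted.
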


\begin{proof} Note that the characteristic function (\ref{eq2.4}) is the Fourier
transform of the density function $f_{X_{t}}  ( \boldsymbol{\cdot}
 )$, that is,
\begin{align*}
\phi_{X_{t}} ( u ) = \int_{\mathbb{R}} e^{iux}
f_{X_{t}} ( x ) \,dx.
\end{align*}
We next apply the inverse Fourier transform and hereafter take (\ref{eq2.4}) and
(\ref{eq2.5}) into account which yields the density function (\ref{eq2.6}). \end{proof}

In what follows, we investigate in detail the jump part of the L\'{e}vy
process $X$ denoted by
\begin{align*}
L_{t} := \int_{0}^{t} \int
_{\mathbb{R}} z \,dN ( s, z ) = \sum_{j =1}^{N_{t}}
Y_{j}
\end{align*}
which constitutes a c\`{a}dl\`{a}g, finite activity compound Poisson
process (CPP) with finitely many jumps in each time interval. In the latter
equation, $N_{t}$ constitutes a standard Poisson process under $\mathbb{Q}$
with deterministic jump intensity $\lambda >0$, i.e. $N_{t} \sim Poi  (
\lambda t  )$, while $Y_{1}, Y_{2}, \dots$ are i.i.d. random variables
modeling the jump amplitudes. We put $\beta:=
\mathbb{E}_{\mathbb{Q}}  [ Y_{1}  ]$. Recall that the compensated
compound Poisson process $ ( L_{t} -\beta\lambda t  )_{t}$
constitutes an $ ( \mathcal{F}_{t}, \mathbb{Q}  )$-martingale
which implies
\begin{align*}
\beta\lambda = \int_{\mathbb{R}} z \,d\nu ( z )
\end{align*}
thanks to (\ref{eq2.1}). Note that $N_{t}$ shall not be mixed up with the Poisson
random measure $dN  ( s, z  )$. Obviously, we may write $X_{t} =
\gamma t + \sigma W_{t} + L_{t}$. We assume that the stochastic processes
$W_{t}$, $N_{t}$ and the random variables $Y_{1}, Y_{2}, \dots$ altogether
are $\mathbb{Q}$-independent.

\begin{exmp}\label{exmp2.2} If $Y_{j}$ is normally distributed with mean $\mu$
and variance $\delta^{2}$ under $\mathbb{Q}$ for all $j$, then the L\'{e}vy
measure possesses the Lebesgue density $d\nu  ( z  ) = \lambda
\varphi_{\mu, \delta^{2}}  ( z  ) \,dz$ where
\begin{align*}
\varphi_{\mu, \delta^{2}} ( z ) := \frac{1} {\sqrt{2
\pi \delta^{2}}} e^{- \frac{1} {2}  ( \frac{z-\mu} {\delta}
 )^{2}}
\end{align*}
and $z \in \mathbb{R}$. Here, $\mu$ and $\delta^{2}$ model the mean
respectively variance of the jump sizes. In this setup, we receive $\beta =
\mu$, $\mathbb{E}_{\mathbb{Q}}  [ X_{t}  ] = t  ( \gamma +
\lambda\mu  )$ and $\mathbh{Var}_{\mathbb{Q}}  [ X_{t}  ] = t
( \sigma^{2} + \lambda \delta^{2} + \lambda \mu^{2} )$.
Evidently, choosing a negative $\mu$ makes the occurrence of downward jumps
more likely than upward jumps and vice versa. We remark that a similar
model specification with normally distributed jump sizes has firstly been
proposed in \cite{12}.
\end{exmp}

\begin{exmp}\label{exmp2.3} If $Y_{j}$ is exponentially distributed with
parameter $\alpha>0$ under $\mathbb{Q}$ for all~$j$, then the L\'{e}vy
measure possesses the Lebesgue density $d\nu  ( z  ) =\lambda
p_{\alpha}  ( z  ) \, dz$ where $p_{\alpha}  ( z  )
:= \alpha e^{- \alpha z}$ and $z \in  [ 0, \infty  [$.
We presently find $\beta= {1} / {\alpha}$.
\end{exmp}

\begin{cor}\label{cor2.4}
\begin{enumerate}
\item[(a)] {Suppose that} $Y_{j}$ {is
normally distributed} (cf.~\cite{12}) {with mean} $\mu$ {and
variance} $\delta^{2}$ {under} $\mathbb{Q}$ {for all} $j$.
{Then for all} $t \in  [ 0,T  ]$ {and} $x
\in \mathbb{R}$ {the probability density function of} $X_{t}$
{under} $\mathbb{Q}$ {takes the form}
\begin{align*}
f_{X_{t}} ( x ) = \frac{1} {2\pi} \int_{\mathbb{R}} \exp
\biggl\{ - iux+t \biggl( iu\gamma - \frac{1} {2} \sigma^{2}
u^{2} +\lambda e^{iu\mu -
{\delta^{2} u^{2}} / {2}} - \lambda \biggr) \biggr\} \,du.
\end{align*}
\item[(b)] {Suppose that} $Y_{j}$ {is exponentially distributed
with parameter} $\alpha>0$ {under} $\mathbb{Q}$ {for all}
$j$. {Then for all} $t \in  [ 0,T  ]$ {and} $x
\in \mathbb{R}$ {the probability density function of} $X_{t}$
{under} $\mathbb{Q}$ {takes the form}
\begin{align*}
f_{X_{t}} ( x ) = \frac{1} {2\pi} \int_{\mathbb{R}} \exp
\biggl\{ i ( \gamma t - x ) u - \frac{1} {2} t \sigma^{2}
u^{2} - \frac{\lambda tu} {u+i\alpha} \biggr\} \,du.
\end{align*}
\end{enumerate}
\end{cor}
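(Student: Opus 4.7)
The plan is to specialize the integral representation in Proposition~\ref{prop2.1} to each of the two jump distributions by evaluating the inner integral $I(u) := \int_{\mathbb{R}}[e^{iuz} - 1]\,d\nu(z)$ in closed form, and then substituting the result back into~(\ref{eq2.6}). No further machinery is needed, since Proposition~\ref{prop2.1} already reduces the density to an inverse Fourier integral parametrized by the characteristic exponent~(\ref{eq2.5}).

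For part~(a), the L\'evy measure from Example~\ref{exmp2.2} has Lebesgue density $\lambda\varphi_{\mu,\delta^{2}}(z)$. I would pull out the constant $\lambda$ and recognize $\int_{\mathbb{R}} e^{iuz}\varphi_{\mu,\delta^{2}}(z)\,dz$ as the characteristic function of a $\mathcal{N}(\mu,\delta^{2})$ random variable, which equals $e^{iu\mu - \delta^{2}u^{2}/2}$ by completing the square in the Gaussian exponent. Consequently $I(u) = \lambda(e^{iu\mu - \delta^{2}u^{2}/2} - 1)$, and inserting this into~(\ref{eq2.6}) produces exactly the stated formula.

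For part~(b), the L\'evy measure from Example~\ref{exmp2.3} is supported on $[0,\infty)$ with density $\lambda\alpha e^{-\alpha z}$. I would compute $\int_{0}^{\infty}\alpha e^{(iu-\alpha)z}\,dz = \alpha/(\alpha - iu)$, the integral converging because $\alpha>0$, so that $I(u) = \lambda[\alpha/(\alpha-iu) - 1] = \lambda\, iu/(\alpha - iu)$. Multiplying numerator and denominator by $i$ rewrites this as $-\lambda u/(u + i\alpha)$. Substituting into~(\ref{eq2.6}) and combining the two $u$-linear terms $-iux + iut\gamma$ into $iu(\gamma t - x)$ yields the stated density.

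The computations are routine and no conceptual obstacle arises; both parts reduce to evaluating a single characteristic-function integral of the jump law. The only point meriting care is the algebraic rearrangement in part~(b), which must be performed so that the denominator appears as $u + i\alpha$ rather than the equivalent $\alpha - iu$ produced directly by the exponential integral.
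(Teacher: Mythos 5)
Your proposal is correct and follows exactly the route the paper intends: its proof of Corollary~\ref{cor2.4} simply says to combine~(\ref{eq2.6}) with Examples~\ref{exmp2.2} and~\ref{exmp2.3}, and you carry out that combination explicitly, with both closed-form evaluations of $\int_{\mathbb{R}}[e^{iuz}-1]\,d\nu(z)$ and the final algebraic rearrangement in part~(b) done correctly.
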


\begin{proof} Combine (\ref{eq2.6}) with Example~\ref{exmp2.2} and Example~\ref{exmp2.3}.
\end{proof}

\begin{prop}[distribution function]
\label{prop2.5}
 {Let} $X_{t}
=\gamma t+\sigma W_{t} + \sum_{j=1}^{N_{t}} Y_{j}$ {and assume that
the standard Poisson process} $N_{t}$ {jumps} $m$ {times in
the time interval} $ [ 0,t  ]$, {that is}, $N_{t} =m$
{with} $m \in \mathbb{N}_{0}$. {As in}~\cite{12},
{suppose that} $Y_{j}$ {is normally distributed with mean}
$\mu$ {and variance} $\delta^{2}$. {Then for any Borel set}
$A \subset \mathbb{R}$ {and} $t \in  [ 0,T  ]$ {the
cumulative probability distribution function of} $X_{t}$ {under}
$\mathbb{Q}$ {possesses the representation}
%
\begin{align}
\label{eq2.7} \mathbb{Q} ( X_{t} \in A ) = \int_{A}
e^{- \lambda t} \sum_{m=0}^{\infty}
\frac{ ( \lambda t  )^{m}} {m! \sqrt{2\pi
 ( \sigma^{2} t+m \delta^{2}  )}} \exp \biggl\{ - \frac{1} {2}
 \frac{ ( x - \gamma t - m\mu  )^{2}} {\sigma^{2} t+m \delta^{2}} \biggr\}
\,dx.
\end{align}
\end{prop}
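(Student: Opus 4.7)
The plan is to exploit the mutual $\mathbb{Q}$-independence of $W_t$, $N_t$ and $(Y_j)_{j\geq 1}$ and to condition on the number of jumps. First, I would write $X_t = (\gamma t + \sigma W_t) + L_t$ with $L_t = \sum_{j=1}^{N_t} Y_j$, observe that the Gaussian part $\gamma t + \sigma W_t$ is $N(\gamma t,\sigma^{2} t)$-distributed under $\mathbb{Q}$, and that it is independent of $L_t$ by assumption.

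Next, I would compute the conditional law of $X_t$ given $\{N_t = m\}$. Since the $Y_j$'s are i.i.d.\ $N(\mu,\delta^{2})$ and independent of $N_t$, the sum of $m$ of them is $N(m\mu, m\delta^{2})$ (with the convention that this reduces to the Dirac mass at $0$ when $m=0$). Adding the independent Gaussian drift--diffusion part yields
\begin{equation*}
X_t \,\big|\, \{N_t = m\} \;\sim\; N\bigl(\gamma t + m\mu,\; \sigma^{2} t + m\delta^{2}\bigr),
\end{equation*}
whose Lebesgue density is precisely the Gaussian factor appearing inside the series in (\ref{eq2.7}). Note that $\sigma^{2} t + m\delta^{2} > 0$ because $\sigma > 0$ is assumed, so the density is well defined for every $m \in \mathbb{N}_{0}$.

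Then I would apply the law of total probability, using the Poisson mass function $\mathbb{Q}(N_t = m) = e^{-\lambda t}(\lambda t)^{m}/m!$, to write
\begin{equation*}
\mathbb{Q}(X_t \in A) \;=\; \sum_{m=0}^{\infty} \mathbb{Q}(X_t \in A \mid N_t = m)\, \mathbb{Q}(N_t = m),
\end{equation*}
substitute the conditional Gaussian density inside each summand, and interchange the sum with the Lebesgue integral over $A$. Pulling the common factor $e^{-\lambda t}$ out of the series yields exactly formula (\ref{eq2.7}).

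The only non-cosmetic point is the sum/integral interchange; it is however routine, since every summand is nonnegative and Tonelli's theorem applies directly. A secondary point worth mentioning is verifying that the $m=0$ term recovers the pure Brownian density $(2\pi\sigma^{2} t)^{-1/2}\exp\{-(x-\gamma t)^{2}/(2\sigma^{2} t)\}$ weighted by $e^{-\lambda t}$, which it does, so no separate case analysis is needed.
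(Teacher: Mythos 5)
Your proposal is correct and follows essentially the same route as the paper: conditioning on $\{ N_{t} = m\}$, identifying the conditional law of $X_{t}$ as $\mathcal{N} ( \gamma t + m\mu,\ \sigma^{2} t + m \delta^{2} )$ via the independence of $W$, $N$ and the $Y_{j}$, and then summing with the Poisson weights. Your explicit appeal to Tonelli for the sum/integral interchange and the check of the $m=0$ term are small additions the paper leaves implicit, but the argument is the same.
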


\begin{proof} Let $A \subset \mathbb{R}$ and $t \in  [ 0,T  ]$.
In accordance to Section 4.3 in \cite{4} and the properties of conditional
probabilities, the following (``quickly converging'' \cite{4}) series
representation for the distribution function of $X_{t}$ under $\mathbb{Q}$
holds
%
\begin{align}
\label{eq2.8} \mathbb{Q} ( X_{t} \in A ) &= \sum
_{m=0}^{\infty} \mathbb{Q} \bigl( \{ X_{t} \in A
\} \cap \{ N_{t} =m \} \bigr)
\nonumber
\\
&= \sum_{m=0}^{\infty} \mathbb{Q} (
X_{t} \in A \mid N_{t} =m ) \mathbb{Q} ( N_{t} =m )
\nonumber
\\
& = e^{- \lambda t} \sum_{m=0}^{\infty}
\mathbb{Q} ( X_{t} \in A \mid N_{t} =m ) \frac{ ( \lambda t
 )^{m}} {m!}
\end{align}
wherein
\begin{align*}
\mathbb{Q} ( X_{t} \in A \mid N_{t} =m ) = \mathbb{Q}
\Biggl( \Biggl( \gamma t+\sigma W_{t} + \sum
_{j=1}^{m} Y_{j} \Biggr) \in A \Biggr).
\end{align*}
Since $Y_{j} \sim \mathcal{N}  ( \mu, \delta^{2}  )$ for all $j$,
we find that the stochastic process $( \gamma t+\sigma W_{t} +
\sum_{j=1}^{m} Y_{j} )_{t}$ also is normally distributed under
$\mathbb{Q}$ with mean $\gamma t+m\mu$ and variance $\sigma^{2} t+m
\delta^{2}$. Thus, by the definition of the cumulative distribution
function, we get
\begin{align*}
\mathbb{Q} \Biggl( \Biggl( \gamma t+\sigma W_{t} + \sum
_{j=1}^{m} Y_{j} \Biggr) \in A \Biggr) =
\int_{A} \varphi_{\gamma t+m\mu, \sigma^{2} t+m
\delta^{2}} ( x ) \,dx
\end{align*}
where $\varphi$ denotes the probability density function of the normal
distribution (see \cite{12} and Example~\ref{exmp2.2} above). Putting the latter equations
together, we end up with (\ref{eq2.7}). \end{proof}

\begin{rem}\label{rem2.6} Verify that the proof of Proposition~\ref{prop2.5} only works,
if the random variables $Y_{j}$ are normally distributed for every $j$. If
$Y_{j}$ is e.g. exponentially distributed for all $j=1,\dots,m$ (as
proposed in Example~\ref{exmp2.3}), then it is unclear how to compute the
probability
\begin{align*}
\mathbb{Q} \Biggl( \Biggl( \gamma t+\sigma W_{t} + \sum
_{j=1}^{m} Y_{j} \Biggr) \in A \Biggr)
\end{align*}
emerging in the sequel of (\ref{eq2.8}).
\end{rem}

\begin{cor}[Eq. (4.12) in \cite{4}]\label{cor2.7}  Under the assumptions
of Proposition~\ref{prop2.5}, for all $t\in  [ 0,T  ]$
{and} $x \in \mathbb{R}$ {the probability density function
of} $X_{t}$ {under} $\mathbb{Q}$ {is given by}
%
\begin{align}
\label{eq2.9} f_{X_{t}} ( x ) = e^{-\lambda t} \sum
_{m=0}^{\infty} \frac{ ( \lambda t  )^{m}} {m! \sqrt{2\pi  ( \sigma^{2} t+m
\delta^{2}  )}} \exp \biggl\{ -
\frac{1} {2} \frac{ ( x-\gamma
t-m\mu  )^{2}} {\sigma^{2} t+m \delta^{2}} \biggr\}.
\end{align}
\end{cor}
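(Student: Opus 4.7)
The plan is to read the density off directly from the distribution function derived in Proposition~\ref{prop2.5}. Indeed, equation (\ref{eq2.7}) asserts that for every Borel set $A \subset \mathbb{R}$ one has $\mathbb{Q}(X_{t} \in A) = \int_{A} g(x)\,dx$, where $g(x)$ denotes the integrand on the right-hand side of (\ref{eq2.7}) and coincides exactly with the right-hand side of (\ref{eq2.9}). The defining property of the Lebesgue density of $X_{t}$ then yields $f_{X_{t}} = g$ (almost everywhere, hence everywhere since both sides are continuous in $x$).

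To make this rigorous, first I would verify that $g$ qualifies as a probability density function. Non-negativity is immediate: every summand is a Gaussian density $\varphi_{\gamma t + m\mu,\, \sigma^{2} t + m \delta^{2}}(x)$ multiplied by the non-negative Poisson weight $e^{-\lambda t}(\lambda t)^{m}/m!$. For the normalization, I would take $A = \mathbb{R}$ in (\ref{eq2.7}) and invoke Tonelli's theorem, which is applicable because the integrand is non-negative, to interchange sum and integral:
\begin{align*}
\int_{\mathbb{R}} g(x)\,dx = e^{-\lambda t} \sum_{m=0}^{\infty} \frac{(\lambda t)^{m}}{m!} \int_{\mathbb{R}} \varphi_{\gamma t + m\mu,\, \sigma^{2} t + m \delta^{2}}(x)\,dx = e^{-\lambda t} \sum_{m=0}^{\infty} \frac{(\lambda t)^{m}}{m!} = 1.
\end{align*}

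An equivalent derivation, closer to the conditioning argument used inside the proof of Proposition~\ref{prop2.5}, would start from the Poisson mixture decomposition implicit in (\ref{eq2.8}) and write
\begin{align*}
f_{X_{t}}(x) = \sum_{m=0}^{\infty} f_{X_{t} \mid N_{t} = m}(x)\, \mathbb{Q}(N_{t} = m),
\end{align*}
using that $X_{t} \mid \{N_{t}=m\} \sim \mathcal{N}(\gamma t + m\mu,\, \sigma^{2} t + m \delta^{2})$ and $\mathbb{Q}(N_{t} = m) = e^{-\lambda t}(\lambda t)^{m}/m!$; inserting the Gaussian density $\varphi_{\gamma t + m\mu,\, \sigma^{2} t + m \delta^{2}}$ from Example~\ref{exmp2.2} produces (\ref{eq2.9}) directly.

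There is no genuine obstacle here: the corollary is essentially a pointwise restatement of Proposition~\ref{prop2.5}. The only minor subtlety worth flagging is the Tonelli interchange needed to pass from the set-indexed identity (\ref{eq2.7}) to the pointwise density formula (\ref{eq2.9}), which is trivially justified by the non-negativity of all terms involved.
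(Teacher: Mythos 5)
Your proof is correct and follows the same route as the paper, which simply reads the density off from the representation (\ref{eq2.7}) of Proposition~\ref{prop2.5}. The additional checks you supply (non-negativity, normalization via Tonelli, and the equivalent Poisson-mixture derivation) are sound elaborations of that one-line argument.
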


\begin{proof} The density can directly be read off in (\ref{eq2.7}). Also see
\cite{12}. \end{proof}

\begin{cor}\label{cor2.8} {If the Borel set} $A=  \,] -\infty,a
 ] \subseteq \mathbb{R}$ {is an interval, then for any} $a
\in \mathbb{R}$ {and} $t\in  [ 0,T  ]$ {the
distribution function in} (\ref{eq2.7}) {takes the form}
\begin{align*}
\mathbb{Q} ( X_{t} \leq a ) = e^{-\lambda t} \sum
_{m=0}^{\infty} \frac{ ( \lambda t  )^{m}} {m!} \varPhi \biggl(
\frac{a-\gamma t-m\mu}{
\sqrt{\sigma^{2} t+m \delta^{2}}} \biggr)
\end{align*}
{where} $\varPhi$ {denotes the standard normal cumulative
distribution function}.
\end{cor}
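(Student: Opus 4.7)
The plan is to simply specialize Proposition~\ref{prop2.5} to the half-line $A = (-\infty, a]$ and then recognize the resulting integrals as standard normal distribution function evaluations. First I would write
\begin{align*}
\mathbb{Q}(X_t \leq a) = \int_{-\infty}^{a} e^{-\lambda t} \sum_{m=0}^{\infty} \frac{(\lambda t)^m}{m!} \varphi_{\gamma t + m\mu,\, \sigma^2 t + m\delta^2}(x)\,dx,
\end{align*}
using the shorthand $\varphi_{\mu, \delta^2}$ introduced in Example~\ref{exmp2.2}, which is exactly the integrand of (\ref{eq2.7}).

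Next I would interchange the sum and the integral. Since every term in the sum is non-negative and the total integrand integrates to $1$ over $\mathbb{R}$ (it is a probability density by Corollary~\ref{cor2.7}), Tonelli's theorem justifies the swap without any further work. This yields
\begin{align*}
\mathbb{Q}(X_t \leq a) = e^{-\lambda t} \sum_{m=0}^{\infty} \frac{(\lambda t)^m}{m!} \int_{-\infty}^{a} \varphi_{\gamma t + m\mu,\, \sigma^2 t + m\delta^2}(x)\,dx.
\end{align*}

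Finally I would apply the standard substitution $y = (x - \gamma t - m\mu)/\sqrt{\sigma^2 t + m\delta^2}$ in each term to reduce the inner integral to the standard normal CDF evaluated at the stated argument, giving exactly the formula claimed in the corollary.

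The proof is essentially a bookkeeping computation, so there is no real obstacle. The only point that requires a brief justification is the interchange of summation and integration, but since all terms are positive this is immediate from Tonelli; no dominated convergence argument is needed.
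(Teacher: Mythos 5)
Your proposal is correct and matches the paper's approach: the paper likewise derives Corollary~\ref{cor2.8} as an immediate specialization of Proposition~\ref{prop2.5} to $A = \,]-\infty,a]$, and your added justification of the sum--integral interchange via Tonelli and the reduction to $\varPhi$ by the standard substitution are exactly the (unstated) bookkeeping steps behind the paper's one-line proof.
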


\begin{proof} This representation is an immediate consequence of
Proposition~\ref{prop2.5}. \end{proof}

Recall that the stochastic process $S_{t}$ will serve as our stock price
model when it comes to cliquet option pricing in the subsequent section. In
this context, for $n \in \mathbb{N}$ we introduce the time partition
$\mathcal{P}:=  \{ 0< t_{0} < t_{1} <\cdots< t_{n} \leq T
 \}$ and define the return/revenue process associated with the period
$ [ t_{k-1}, t_{k}  ]$ via
%
\begin{align}
\label{eq2.10} R_{k} := \frac{S_{t_{k}} - S_{t_{k-1}}} {S_{t_{k-1}}} = e^{X_{t_{k}} - X_{t_{k-1}}} -1
\end{align}
where $k\in  \{ 1,\dots,n  \}$ and $X$ is the L\'{e}vy process
defined in (\ref{eq2.3}). Note that $R_{1},\dots, R_{n}$ are
$\mathbb{Q}$-independent and that $R_{k} >-1$ $\mathbb{Q}$-almost sure for
all $k$. For the sake of notational simplicity, we always work under the
assumption of equidistant time points in the following and define
$\tau:= t_{k} - t_{k-1}$. If we want to refrain from this
assumption again, then $\tau$ simply has to be replaced by the difference
$t_{k} - t_{k-1}$ in all subsequent equations -- with (\ref{eq3.17}) as an
exception.

\begin{prop}\label{prop2.9}
{Let} $\mathcal{P}=  \{ 0< t_{0} <
t_{1} <\cdots< t_{n} \leq T  \}$ {and put} $\tau= t_{k} -
t_{k-1}$ {for} $k\in  \{ 1,\dots,n  \}$
({equidistant time points}). {Define the return process}
$R_{k}$ {as in} (\ref{eq2.10}). {Then for any fixed real-valued}
$\xi>-1$ {the distribution function of} $R_{k}$ {under}
$\mathbb{Q}$ {admits the series representation}
%
\begin{align}
\label{eq2.11} \mathbb{Q} ( R_{k} \leq\xi ) = e^{-\lambda\tau} \sum
_{m=0}^{\infty} \frac{ ( \lambda\tau  )^{m}} {m!} \varPhi
\biggl( \frac{\ln  ( 1+\xi  ) -\gamma\tau-m\mu} {\sqrt{\sigma^{2} \tau+m
\delta^{2}}} \biggr)
\end{align}
{where} $\varPhi$ {denotes the standard normal cumulative
distribution function}.
\end{prop}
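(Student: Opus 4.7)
The plan is to reduce Proposition~\ref{prop2.9} to Corollary~\ref{cor2.8} via the stationarity of the increments of the Lévy process $X$ together with the strict monotonicity of the exponential (resp.\ logarithm).

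First I would observe that, since $\xi>-1$, the map $y\mapsto e^{y}-1$ is a strictly increasing bijection from $\mathbb{R}$ onto $\,]-1,\infty[\,$, so the event $\{R_{k}\le\xi\}$ is precisely the event
\begin{align*}
\bigl\{ X_{t_{k}} - X_{t_{k-1}} \le \ln(1+\xi) \bigr\}
\end{align*}
in view of the definition \eqref{eq2.10}. The next step is to invoke the fact that $X$, being a Lévy process, has stationary increments under $\mathbb{Q}$: the increment $X_{t_{k}}-X_{t_{k-1}}$ is distributed as $X_{\tau}-X_{0}=X_{\tau}$, where $\tau=t_{k}-t_{k-1}$. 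Hence
\begin{align*}
\mathbb{Q}(R_{k}\le\xi) = \mathbb{Q}\bigl(X_{\tau}\le \ln(1+\xi)\bigr).
\end{align*}

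At this point the statement follows by direct application of Corollary~\ref{cor2.8} with the choices $t=\tau$ and $a=\ln(1+\xi)\in\mathbb{R}$, which yields exactly the series representation \eqref{eq2.11}. A brief comment should be added to justify that Corollary~\ref{cor2.8} is applicable in the present setting: the jump sizes $Y_{j}$ are assumed normally distributed with parameters $\mu,\delta^{2}$ (this is inherited implicitly from the setting of Proposition~\ref{prop2.5}, on which Corollary~\ref{cor2.8} rests), and $\tau>0$ since the time points in $\mathcal{P}$ are strictly increasing.

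There is essentially no hard step here; the proof is a short chain of equivalences. The only point that requires a bit of care is the explicit use of stationary increments to reduce $X_{t_{k}}-X_{t_{k-1}}$ to $X_{\tau}$, which is what makes the right-hand side of \eqref{eq2.11} depend only on $\tau$ and not on the individual times $t_{k-1},t_{k}$; this is also the place where the equidistant assumption enters, since without it $\tau$ would have to be replaced by $t_{k}-t_{k-1}$ as the author remarks prior to the statement.
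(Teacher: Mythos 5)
Your proposal is correct and follows essentially the same route as the paper: both reduce $\{R_{k}\le\xi\}$ to $\{X_{\tau}\le\ln(1+\xi)\}$ via monotonicity and stationary increments, and then evaluate the resulting probability using the normal-mixture law of $X_{\tau}$. The only cosmetic difference is that you cite Corollary~\ref{cor2.8} directly, whereas the paper integrates the density \eqref{eq2.9} explicitly; these are the same computation.
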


\begin{proof} Since $X$ is a L\'{e}vy process under $\mathbb{Q}$, we
observe $X_{t_{k}} - X_{t_{k-1}} \cong X_{\tau}$ (stationary increments)
where $\tau= t_{k} - t_{k-1}$ and the symbol $\cong$ denotes equality in
distribution. Taking (\ref{eq2.10}) and (\ref{eq2.9}) into account, we obtain
%
\begin{align}
\label{eq2.12} \mathbb{Q} ( R_{k} \leq\xi ) &= \mathbb{Q} \bigl(
X_{\tau} \leq \ln ( 1+\xi ) \bigr)
\nonumber
\\
&= e^{-\lambda\tau} \sum_{m=0}^{\infty}
\frac{ ( \lambda\tau  )^{m}} {m!} \int_{-\infty}^{\ln  ( 1+\xi
 )}
\varphi_{\gamma\tau+m\mu, \sigma^{2} \tau+m \delta^{2}} ( x ) \,dx
\end{align}
with
%
\begin{align}
\label{eq2.13} f_{X_{\tau}} ( x ) = e^{-\lambda\tau} \sum
_{m=0}^{\infty} \frac{ ( \lambda\tau  )^{m}} {m!} \varphi_{\gamma\tau+m\mu,
\sigma^{2} \tau+m \delta^{2}}
( x )
\end{align}
where $\varphi$ denotes the density function of the normal distribution
(recall Example~\ref{exmp2.2}). We finally perform the integration and end up with
(\ref{eq2.11}). \end{proof}

In quantitative risk management, it is often of interest to compute the
probability of large drawdowns (shocks) in asset prices like e.g.
$\mathbb{Q}  ( S_{u} \leq\kappa S_{t}  )$, $0\leq t<u\leq T$,
where $\kappa$ constitutes some stress scenario percentile like 60\%, for
instance. Due to (\ref{eq2.2}), we find $\mathbb{Q}  ( S_{u} \leq\kappa S_{t}
 ) = \mathbb{Q}  ( X_{u-t} \leq \ln \kappa  )$ which can
easily be computed by Corollary~\ref{cor2.8}.

\section{Cliquet option pricing}\label{sec3}

This section is dedicated to the pricing of cliquet options in the L\'{e}vy
jump-diffusion stock price model presented in Section~\ref{sec2}. In accordance to
(1.1) in \cite{3}, we consider a cliquet option with payoff
\begin{align*}
H_{T} =K+K \max \Biggl\{ g, \sum_{k=1}^{n}
\min \{ c, R_{k} \} \Biggr\}
\end{align*}
where $T$ is the maturity time, $K$ denotes the notional (the initial
investment), $g$ is the guaranteed rate at maturity, $c\geq0$ is the local
cap and $R_{k}$ is the return process defined in (\ref{eq2.10}). This option is of
monthly sum cap style with credited rate based on the sum of the
monthly-capped rates \cite{3}. Verify that the payoff $H_{T}$ is
globally-floored and locally-capped. A popular choice in the insurance
industry is to take $g=0$ (globally-floored by zero) and $n=12$
(monthly-capped by $c$). Further, note that the payoff $H_{T}$ actually is a
function of multiple random variables, i.e. $H_{T} =h  ( R_{1},\dots,
R_{n}  ) = \overline{h}  ( S_{t_{0}},\dots, S_{t_{n}}  )$
wherein $h$ and $\overline{h}$ are appropriately defined functions while
the resetting times of the cliquet option are ordered as follows $0< t_{0}
< t_{1} <\cdots< t_{n} \leq T$. In this regard, a notation like
$H_{t_{0},\dots, t_{n}}  ( T  )$ might be more intuitive than
simply writing $H_{T}$. However, by a case distinction we observe
\begin{align*}
H_{T} =K \max \Biggl\{ 1+g,1+ \sum_{k=1}^{n}
\min \{ c, R_{k} \} \Biggr\} =K \Biggl( 1+g+ \max \Biggl\{ 0, \sum
_{k=1}^{n} Z_{k} \Biggr\}
\Biggr)
\end{align*}
where $Z_{k} := \min  \{ c, R_{k}  \} - {g} / {n}$ denote
i.i.d. random variables. Moreover, we introduce a bank account $d B_{t} =r
B_{t} \,dt$ with constant interest rate $r>0$ and initial capital $B_{0} =1$,
i.e. $B_{t} = e^{rt}$. Then the price at time $t\leq T$ of a cliquet option
with payoff $H_{T}$ at maturity $T$ is the discounted risk-neutral
conditional expectation of the payoff,~i.e.
\begin{align*}
C_{t} = e^{-r  ( T-t  )} \mathbb{E}_{\mathbb{Q}} ( H_{T}
| \mathcal{F}_{t} ).
\end{align*}
Combining the latter equations, we obtain
%
\begin{align}
\label{eq3.1} C_{0} =K e^{-rT} \Biggl( 1+g+
\mathbb{E}_{\mathbb{Q}} \Biggl[ \max \Biggl\{ 0, \sum
_{k=1}^{n} Z_{k} \Biggr\} \Biggr] \Biggr)
\end{align}
which shows that the considered cliquet option with payoff $H_{T}$
essentially is a plain-vanilla call option with strike zero written on the
basket-style underlying $\sum_{k=1}^{n} Z_{k}$.

\begin{prop}[Cliquet option price]
\label{prop3.1}
 {Let} $k\in  \{
1,\dots,n  \}$ {and consider the independent and identically
distributed random variables} $Z_{k} = \min  \{ c, R_{k}  \} - {g}
/ {n}$ {where} $c\geq0$ {is the local cap}, $R_{k}$
{is the return process defined in} (\ref{eq2.10}) {and} $g$
{is the guaranteed rate at maturity. Denote the maturity time by}
$T$, {the notional by} $K$ {and the risk-less interest rate
by} $r$. {Then the price at time zero of a cliquet option with
payoff} $H_{T}$ {can be represented as}
%
\begin{align}
\label{eq3.2} C_{0} =K e^{-rT} \Biggl( 1+g+
\frac{n} {2} \mathbb{E}_{\mathbb{Q}} [ Z_{1} ] +
\frac{1} {\pi} \int_{0^{ +}}^{\infty} \frac{1-  \mathfrak{Re}  (
\phi_{Z}  ( x  )  )} {x^{2}}
\,dx \Biggr)
\end{align}
{where} $\mathfrak{Re}$ {denotes the real part and the characteristic
function} $\phi_{Z}  ( x  )$ {is defined via}
%
\begin{align}
\label{eq3.3} \phi_{Z} ( x ) := \prod_{k=1}^{n}
\phi_{Z_{k}} ( x ) = \prod_{k=1}^{n}
\mathbb{E}_{\mathbb{Q}} \bigl[ e^{ix Z_{k}} \bigr] = \bigl(
\phi_{Z_{1}} ( x ) \bigr)^{n} = \bigl( \mathbb{E}_{\mathbb{Q}}
\bigl[ e^{ix Z_{1}} \bigr] \bigr)^{n}.
\end{align}
\end{prop}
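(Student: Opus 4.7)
The plan is to start from the representation (\ref{eq3.1}) and reduce everything to computing
$\mathbb{E}_{\mathbb{Q}}[\max\{0,S\}]$, where $S:=\sum_{k=1}^{n} Z_{k}$. The classical identity $\max\{0,y\}=\tfrac{1}{2}(y+|y|)$ together with linearity of expectation and the i.i.d.\ assumption on the $Z_{k}$ gives
\begin{align*}
\mathbb{E}_{\mathbb{Q}}\bigl[\max\{0,S\}\bigr]=\frac{n}{2}\,\mathbb{E}_{\mathbb{Q}}[Z_{1}]+\frac{1}{2}\,\mathbb{E}_{\mathbb{Q}}\bigl[|S|\bigr],
\end{align*}
which already accounts for the first two terms on the right-hand side of (\ref{eq3.2}). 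It remains to represent $\mathbb{E}_{\mathbb{Q}}[|S|]$ by the Fourier integral appearing in (\ref{eq3.2}).

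For this, I would invoke the well-known Frullani-type identity
\begin{align*}
|y|=\frac{2}{\pi}\int_{0^{+}}^{\infty}\frac{1-\cos(xy)}{x^{2}}\,dx,\qquad y\in\mathbb{R},
\end{align*}
which can be verified by the substitution $u=xy$ and the standard evaluation $\int_{0}^{\infty}\frac{1-\cos u}{u^{2}}\,du=\frac{\pi}{2}$. Substituting $y=S$, taking expectations, and interchanging integration and expectation via Fubini--Tonelli (legitimate because the integrand $\tfrac{1-\cos(xS)}{x^{2}}$ is nonnegative) produces
\begin{align*}
\mathbb{E}_{\mathbb{Q}}\bigl[|S|\bigr]=\frac{2}{\pi}\int_{0^{+}}^{\infty}\frac{1-\mathbb{E}_{\mathbb{Q}}[\cos(xS)]}{x^{2}}\,dx=\frac{2}{\pi}\int_{0^{+}}^{\infty}\frac{1-\mathfrak{Re}(\phi_{S}(x))}{x^{2}}\,dx,
\end{align*}
where I used $\mathbb{E}_{\mathbb{Q}}[\cos(xS)]=\mathfrak{Re}\,\mathbb{E}_{\mathbb{Q}}[e^{ixS}]=\mathfrak{Re}(\phi_{S}(x))$.

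The final step is to identify $\phi_{S}(x)$ with the $\phi_{Z}(x)$ defined in (\ref{eq3.3}): by the $\mathbb{Q}$-independence of the $Z_{k}$, $\phi_{S}(x)=\prod_{k=1}^{n}\phi_{Z_{k}}(x)$, and by the identical distribution assumption this collapses to $(\phi_{Z_{1}}(x))^{n}$, which is exactly $\phi_{Z}(x)$. Plugging the expression for $\mathbb{E}_{\mathbb{Q}}[|S|]$ back into the formula for $\mathbb{E}_{\mathbb{Q}}[\max\{0,S\}]$, and then inserting the result into (\ref{eq3.1}), yields (\ref{eq3.2}).

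The main obstacle is not the algebra but the analytic justifications: verifying the improper integral representation of $|y|$ (in particular the behaviour of the integrand as $x\downarrow 0$, where $1-\cos(xy)\sim \tfrac{1}{2}x^{2}y^{2}$ ensures integrability at the origin, while the $x^{-2}$ decay handles the tail) and the associated Fubini argument. One should also note that the existence of $\mathbb{E}_{\mathbb{Q}}[Z_{1}]$, and hence of $\mathbb{E}_{\mathbb{Q}}[|S|]$, is needed; this follows from the moment assumption on the jump function imposed in Section~\ref{sec2} together with $Z_{k}\leq c-g/n$, which bounds the positive part of $Z_{k}$, so only integrability of the negative part, inherited from $\mathbb{E}_{\mathbb{Q}}[X_{\tau}]<\infty$, has to be checked.
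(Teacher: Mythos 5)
Your proposal is correct and follows essentially the same route as the paper: decompose $\max\{0,\cdot\}$ via $\tfrac12(y+|y|)$, represent $|y|$ by the integral $\tfrac{2}{\pi}\int_{0^+}^{\infty}\frac{1-\cos(xy)}{x^2}\,dx$, apply Fubini, and identify $\mathbb{E}_{\mathbb{Q}}[\cos(xS)]=\mathfrak{Re}(\phi_Z(x))$ using independence. The only cosmetic difference is that the paper writes the kernel as $\frac{2-e^{iax}-e^{-iax}}{x^2}$ and then collapses $\tfrac12(\phi_Z(x)+\phi_Z(-x))$ to the real part, whereas you work with the cosine directly.
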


More explicit expressions for $\phi_{Z}  ( x  )$ and
$\mathbb{E}_{\mathbb{Q}}  [ Z_{1}  ]$ are derived in several
propositions below.

\begin{proof} The proof essentially follows the same lines as the proof
of Proposition 3.1 in~\cite{3} whereas our proof does not make use of the
Rademacher random variable introduced in \cite{3}. To begin with, we recall
that
\begin{align*}
&\max \{ 0,a \} = \frac{a+  \llvert  a  \rrvert } {2},
\\
& \llvert a \rrvert = \frac{2} {\pi} \int_{0^{ +}}^{\infty}
\frac{1-
\cos  ( ax  )} {x^{2}} \,dx = \frac{1} {\pi} \int_{0^{ +}}^{\infty}
\frac{2- e^{iax} - e^{-iax}} {x^{2}} \,dx
\end{align*}
similar to (3.2) and (3.3) in \cite{3}. As a consequence, we deduce
\begin{align*}
\mathbb{E}_{\mathbb{Q}} \Biggl[ \max \Biggl\{ 0, \sum
_{k=1}^{n} Z_{k} \Biggr\} \Biggr] = \sum
_{k=1}^{n} \frac{\mathbb{E}_{\mathbb{Q}}  [
Z_{k}  ]} {2} + \int
_{0^{ +}}^{\infty} \frac{2- \phi_{Z}  ( x
 ) - \phi_{Z}  ( -x  )} {2\pi x^{2}} \,dx
\end{align*}
where the characteristic function $\phi_{Z}  ( x  )$ is such as
defined in (\ref{eq3.3}). In the derivation of the latter equation, we used the
fact that $Z_{1},\dots, Z_{n}$ are i.i.d. random variables under
$\mathbb{Q}$. Since
\begin{align*}
\frac{1} {2} \bigl( \phi_{Z} ( x ) + \phi_{Z} ( -x
) \bigr) &= \mathbb{E}_{\mathbb{Q}} \Biggl[ \cos \Biggl( x \sum
_{k=1}^{n} Z_{k} \Biggr) \Biggr]
\\
&=\mathfrak{Re} \Biggl( \mathbb{E}_{\mathbb{Q}} \Biggl[ \exp \Biggl\{ ix \sum
_{k=1}^{n} Z_{k} \Biggr\}
\Biggr] \Biggr) =\mathfrak{Re} \bigl( \phi_{Z} ( x ) \bigr)
\end{align*}
we get
\begin{align*}
\mathbb{E}_{\mathbb{Q}} \Biggl[ \max \Biggl\{ 0, \sum
_{k=1}^{n} Z_{k} \Biggr\} \Biggr] = \sum
_{k=1}^{n} \frac{\mathbb{E}_{\mathbb{Q}}  [
Z_{k}  ]} {2} + \int
_{0^{ +}}^{\infty} \frac{1- \mathfrak{Re}  ( \phi_{Z}
 ( x  )  )} {\pi x^{2}} \,dx.
\end{align*}
Substituting this into (\ref{eq3.1}) leads us to (\ref{eq3.2}). \end{proof}

The remaining challenge now consists in finding appropriate computation
techniques for the entities $\mathbb{E}_{\mathbb{Q}}  [ Z_{1}  ]$
and $\phi_{Z}  ( x  )$ emerging in (\ref{eq3.2}). In the subsequent
sections, we present different methods to derive expressions for the
mentioned entities. Similar to the notation introduced in Proposition~\ref{prop2.9},
for arbitrary $k\in  \{ 1,\dots,n  \}$ we set $\tau= t_{k} -
t_{k-1}$ in the following. We also assume that the jump amplitudes are
normally distributed, as pointed out in Example~\ref{exmp2.2}.

\subsection{Cliquet option pricing with distribution
functions}\label{sec3.1}

Let us first apply a method involving probability distribution functions
(cf. \cite{3}). We initially investigate the treatment of $\phi_{Z}  ( x
 )$ as defined in (\ref{eq3.3}).

\begin{prop}\label{prop3.2}
{Suppose that} $Z_{k} = \min  \{ c,
R_{k}  \} - {g} / {n}$ {where} $k\in  \{ 1,\dots,n
 \}$. {Then the characteristic function of} $Z_{k}$
{under} $\mathbb{Q}$ {can be represented as}
%
\begin{align}
\label{eq3.4} \phi_{Z_{k}} ( x ) &= e^{-ix  ( 1+ {g} / {n}  )} \Biggl(
e^{ix  ( 1+c  )}
\nonumber\\
&\quad -ix e^{-\lambda\tau} \sum_{m=0}^{\infty}
\frac{ ( \lambda\tau  )^{m}} {m!} \int_{0^{ +}}^{1+c} e^{ixw}
\varPhi \biggl( \frac{\ln  ( w  ) -\gamma\tau-m\mu} {\sqrt{\sigma^{2}
\tau+m \delta^{2}}} \biggr) \,dw \Biggr)
\end{align}
{where} $\varPhi$ {denotes the standard normal cumulative
distribution function}.
\end{prop}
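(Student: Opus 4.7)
The plan is to compute $\phi_{Z_k}(x) = \mathbb{E}_{\mathbb{Q}}[e^{ix Z_k}]$ by (i) absorbing the deterministic shift $-g/n$, (ii) turning $\min\{c, R_k\}$ into a bounded positive random variable via the substitution $W := 1 + \min\{c, R_k\} = \min\{1+c,\, 1 + R_k\}$, (iii) applying Stieltjes integration by parts to $\mathbb{E}_{\mathbb{Q}}[e^{ix W}]$, and (iv) plugging in the series representation for the distribution function of $R_k$ provided by Proposition~\ref{prop2.9}.

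First I would write $\phi_{Z_k}(x) = e^{-ix g/n}\,\mathbb{E}_{\mathbb{Q}}[e^{ix \min\{c,R_k\}}] = e^{-ix(1+g/n)}\,\mathbb{E}_{\mathbb{Q}}[e^{ixW}]$, which already matches the outer factor $e^{-ix(1+g/n)}$ in the target formula. Because $R_k > -1$ $\mathbb{Q}$-a.s.\ by (\ref{eq2.10}), the variable $W$ is supported on $(0, 1+c]$ with a (possible) atom at $1+c$ carrying mass $\mathbb{Q}(R_k \geq c)$; on $(0, 1+c)$ it inherits a Lebesgue density from $R_k$, since $X_\tau$ admits the density~(\ref{eq2.9}).

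Next I would represent $\mathbb{E}_{\mathbb{Q}}[e^{ixW}] = \int_{(0,\,1+c]} e^{ixw}\,dF_W(w)$ and perform Stieltjes integration by parts using that $e^{ixw}$ is $\mathcal{C}^1$ and $F_W$ is right-continuous and of bounded variation. With $F_W(1+c) = 1$ and $F_W(0^+) = 0$ (the latter because $W > 0$ a.s.), the boundary contribution is simply $e^{ix(1+c)}$ and one obtains
\begin{align*}
\mathbb{E}_{\mathbb{Q}}\bigl[e^{ixW}\bigr] = e^{ix(1+c)} - ix\int_{0^+}^{1+c} e^{ixw} F_W(w)\,dw.
\end{align*}
Crucially, this single identity handles the atom at $1+c$ automatically, avoiding any case split. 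On the open interval $w \in (0, 1+c)$ one has $F_W(w) = \mathbb{Q}(R_k \leq w-1)$, and applying Proposition~\ref{prop2.9} with $\xi = w-1$ yields the series expansion in $\Phi$ that appears in (\ref{eq3.4}).

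Finally I would substitute this expansion into the integral, swap the summation and integration (justified by the Poisson weights $e^{-\lambda\tau}(\lambda\tau)^m/m!$ summing to one, so that the resulting integrand is dominated uniformly in $m$ by an integrable function on $[0, 1+c]$), and collect terms to obtain (\ref{eq3.4}). The only genuinely delicate point I anticipate is the integration by parts at the endpoint $w = 1+c$, where $W$ has a point mass; writing the expectation as a Lebesgue--Stieltjes integral and using a right-continuous $F_W$ resolves it cleanly. Everything else is bookkeeping.
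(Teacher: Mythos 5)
Your proposal is correct and follows essentially the same route as the paper: shift $Z_k$ by $1+g/n$ to obtain a positive bounded random variable, apply integration by parts to its characteristic function, and substitute the series representation of $\mathbb{Q}(R_k \leq w-1)$ from Proposition~\ref{prop2.9}. The only cosmetic difference is that the paper invokes the survival-function identity $\phi_{\varLambda}(x) = 1 + ix\int_{0}^{\infty} e^{ixu}\,\mathbb{Q}(\varLambda > u)\,du$ and then converts to the distribution function via $ix\int_{0}^{1+c} e^{ixw}\,dw = e^{ix(1+c)}-1$, whereas you perform the Lebesgue--Stieltjes integration by parts directly on $F_W$, which handles the atom at $1+c$ in one step.
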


\begin{proof} By a case distinction, we find that the distribution
function of $Z_{k}$ is given by
%
\begin{align}
\label{eq3.5} \mathbb{Q} ( Z_{k} >\xi ) = \mathbb{Q} (
R_{k} - {g} / {n} >\xi )
\end{align}
if $R_{k} \leq c$ and $\xi\leq c- {g} / {n}$, whereas $\mathbb{Q}  (
Z_{k} >\xi  ) =0$ otherwise (cf. (3.15) in \cite{3}). Since $R_{k} >-1$
$\mathbb{Q}$-a.s. for all $k$, we deduce $Z_{k} >-1- {g} / {n}$
$\mathbb{Q}$-a.s. for all $k$. Thus, $Z_{k} +1+ {g} / {n} >0$
$\mathbb{Q}$-a.s. for all $k$. With respect to (\ref{eq3.5}), we obtain
%
\begin{align}
\label{eq3.5a} \mathbb{Q} ( Z_{k} +1+ {g} / {n} >w ) = \mathbb{Q} (
Z_{k} >w-1- {g} / {n} ) = \mathbb{Q} ( R_{k} >w-1 )
\tag{3.5a}
\end{align}
if $R_{k} \leq c$ and $w\leq1+c$, whereas $\mathbb{Q}  ( Z_{k} +1+ {g}
/ {n} >w  ) =0$ otherwise. Further on, verify that for the
characteristic function the following relation holds
%
\begin{align}
\label{eq3.6} \phi_{Z_{k} +1+ {g} / {n}} ( x ) e^{-ix  ( 1+ {g} / {n}
 )} = \phi_{Z_{k}}
( x ).
\end{align}
Moreover, we recall that for any random variable $\varLambda \geq0$ with
finite first moment, its characteristic function can be represented as
\begin{align*}
\phi_{\varLambda} ( x ) =1+ix \int_{0}^{\infty}
e^{ixu} \mathbb{ Q} ( \varLambda >u ) \,du.
\end{align*}
(This equality follows from integration by parts; cf. Eq. (3.14) in \cite{3}.)
Combining the latter equation with (\ref{eq3.6}) and (\ref{eq3.5a}), we deduce
%
\begin{align}
\label{eq3.7} \phi_{Z_{k}} ( x ) = e^{-ix  ( 1+ {g} / {n}  )} \Biggl( 1+ix \int
_{0}^{1+c} e^{ixw} \mathbb{ Q} (
R_{k} >w-1 ) \,dw \Biggr)
\end{align}
which can be rewritten as
\begin{align*}
\phi_{Z_{k}} ( x ) = e^{-ix  ( 1+ {g} / {n}  )} \Biggl( e^{ix  ( 1+c  )} -ix \int
_{0}^{1+c} e^{ixw} \mathbb{ Q} (
R_{k} \leq w-1 ) \,dw \Biggr).
\end{align*}
Merging (\ref{eq2.11}) into the latter equation while noting that in (\ref{eq2.11}) it
holds $\xi>-1$, we finally end up with (\ref{eq3.4}). \end{proof}

If we insert (\ref{eq3.4}) into (\ref{eq3.3}), we eventually get a representation for
the characteristic function $\phi_{Z}  ( x  )$. Let us proceed with the computation of $\mathbb{E}_{\mathbb{Q}}  [
Z_{k}  ]$.

\begin{prop}\label{prop3.3}
 {Suppose that} $Z_{k} = \min  \{ c,
R_{k}  \} - {g} / {n}$ {where} $k\in  \{ 1,\dots,n
 \}$. {Then the first moment of} $Z_{k}$ {under}
$\mathbb{Q}$ {is given by}
%
\begin{align}
\label{eq3.8} \mathbb{E}_{\mathbb{Q}} [ Z_{k} ] =c-
\frac{g} {n} - e^{-\lambda\tau} \sum_{m=0}^{\infty}
\frac{ ( \lambda\tau  )^{m}}{
m!} \int_{0^{ +}}^{1+c} \varPhi \biggl(
\frac{\ln  ( w  )
-\gamma\tau-m\mu} {\sqrt{\sigma^{2} \tau+m \delta^{2}}} \biggr) \,dw
\end{align}
{where} $\varPhi$ {denotes the standard normal distribution
function}.
\end{prop}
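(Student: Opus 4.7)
The plan is to mimic the approach used in the proof of Proposition~\ref{prop3.2}, but starting from the tail-integral representation of the expectation of a non-negative random variable instead of its characteristic function. This way most of the work already done for (\ref{eq3.4}) can be recycled.

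First I would shift $Z_{k}$ to make it non-negative. Since $R_{k} > -1$ $\mathbb{Q}$-a.s., the variable $\Lambda := Z_{k} + 1 + g/n = \min\{c, R_{k}\} + 1 \geq 0$, and $\mathbb{E}_{\mathbb{Q}}[Z_{k}] = \mathbb{E}_{\mathbb{Q}}[\Lambda] - 1 - g/n$. For a non-negative random variable one has the standard identity
\begin{align*}
\mathbb{E}_{\mathbb{Q}}[\Lambda] = \int_{0}^{\infty} \mathbb{Q}(\Lambda > w)\, dw,
\end{align*}
which is the mean-analogue of the integration-by-parts identity used in the previous proof.

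Next I would invoke (\ref{eq3.5a}): the tail probability $\mathbb{Q}(\Lambda > w)$ equals $\mathbb{Q}(R_{k} > w - 1)$ for $w \in (0, 1+c]$ and vanishes for $w > 1+c$. Hence the integral truncates to $(0, 1+c]$, and writing $\mathbb{Q}(R_{k} > w-1) = 1 - \mathbb{Q}(R_{k} \leq w-1)$ produces
\begin{align*}
\mathbb{E}_{\mathbb{Q}}[\Lambda] = (1+c) - \int_{0^{+}}^{1+c} \mathbb{Q}(R_{k} \leq w - 1)\, dw.
\end{align*}
Subtracting $1 + g/n$ already gives the leading term $c - g/n$ of (\ref{eq3.8}).

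It then remains to substitute the series representation (\ref{eq2.11}) (with $\xi = w - 1 > -1$) for $\mathbb{Q}(R_{k} \leq w - 1)$ and interchange the sum with the $dw$-integral. The swap is justified by Tonelli's theorem since every summand is non-negative, and it produces exactly the infinite-sum integral appearing in (\ref{eq3.8}). The main -- though minor -- obstacle is keeping track of the boundary at $w = 0^{+}$, where $\ln(w) \to -\infty$ makes $\Phi(\cdot) \to 0$, so the integrand is well-defined and integrable near zero; this is what the notation $\int_{0^{+}}^{1+c}$ records. Everything else is bookkeeping.
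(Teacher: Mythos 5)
Your proof is correct, and it reaches the paper's key intermediate identity $\mathbb{E}_{\mathbb{Q}}[Z_k] = c - g/n - \int_{0^+}^{1+c}\mathbb{Q}(R_k \le w-1)\,dw$ by a slightly different route. The paper obtains this by substituting the characteristic-function representation (\ref{eq3.7}) into the moment formula (\ref{eq3.9}), i.e.\ by differentiating $\phi_{Z_k}$ at $x=0$; you instead apply the layer-cake (tail-integral) formula directly to the shifted, non-negative and bounded variable $\varLambda = Z_k + 1 + g/n = \min\{c,R_k\}+1 \in (0, 1+c]$ together with the tail identity (\ref{eq3.5a}). The two arguments are close cousins -- the tail-integral formula for the mean is exactly what one gets by evaluating the derivative at zero of the integration-by-parts identity underlying (\ref{eq3.7}) -- but yours is the more elementary and self-contained of the two: it needs neither Proposition 2.4 of the cited reference nor any justification for differentiating under the integral sign, and your appeal to Tonelli for the sum--integral interchange (legitimate, since all summands are non-negative) makes explicit a step the paper passes over silently. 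The paper's version, in exchange, recycles the already-derived formula (\ref{eq3.7}) and so is shorter given the preceding work. The final substitution of the series (\ref{eq2.11}) with $\xi = w-1 > -1$ and the observation that $\varPhi(\cdot)\to 0$ as $w\to 0^+$ are handled identically in both.
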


\begin{proof} In accordance to Proposition 2.4 in \cite{4}, we have
%
\begin{align}
\label{eq3.9} \mathbb{E}_{\mathbb{Q}} [ Z_{k} ] =
\frac{1} {i}  \frac{\partial} {\partial x} \bigl( \phi_{Z_{k}} ( x )
\bigr) \bigg\rrvert _{x=0}.
\end{align}
A substitution of (\ref{eq3.7}) into (\ref{eq3.9}) yields
\begin{align*}
\mathbb{E}_{\mathbb{Q}} [ Z_{k} ] =c- \frac{g} {n} - \int
_{0^{
+}}^{1+c} \mathbb{Q} ( R_{k} \leq w-1 )
\,dw.
\end{align*}
We ultimately put (\ref{eq2.11}) into the latter equation and receive (\ref{eq3.8}).
\end{proof}

\subsection{Cliquet option pricing with Fourier transform
techniques}\label{sec3.2}

There is an alternative method to derive expressions for
$\mathbb{E}_{\mathbb{Q}}  [ Z_{k}  ]$, $\phi_{Z}  ( x  )$
and $C_{0}$ involving Fourier transforms and the L\'{e}vy--Khinchin formula.
In the following, we present this method.

\begin{prop}\label{prop3.4} {Suppose that} $Z_{k} = \min  \{ c,
R_{k}  \} - {g} / {n}$ {where} $k\in  \{ 1,\dots,n
 \}$ {and let} $a>0$ {be a finite real-valued dampening\querymark{Q1}
parameter. Then the first moment of} $Z_{k}$ {under} $\mathbb{Q}$
{can be represented as}
%
\begin{align}
\label{eq3.10} \mathbb{E}_{\mathbb{Q}} [ Z_{k} ] =c-
\frac{g} {n} - \frac{1}{
2\pi} \int_{\mathbb{R}}
\frac{ ( c+1  )^{1+a+iy}} { ( a+iy
 )  ( 1+a+iy  )} \phi_{X_{\tau}} ( ia-y ) \,dy
\end{align}
{where the characteristic function} $\phi_{X_{\tau}}$ {is
given by}
%
\begin{align}
\label{eq3.11} \phi_{X_{\tau}} ( ia-y ) &= e^{-\lambda\tau} \sum
_{m=0}^{\infty} \frac{ ( \lambda\tau  )^{m}} {m!}
\nonumber
\\
&\quad \times \exp \biggl\{ ( a+iy ) \biggl( \frac{1} {2} \bigl(
\sigma^{2} \tau+m \delta^{2} \bigr) ( a+iy ) -\gamma\tau-m\mu
\biggr) \biggr\}.
\end{align}
\end{prop}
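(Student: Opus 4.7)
The plan is to reduce the computation of $\mathbb{E}_{\mathbb{Q}}[Z_k]$ to pricing a single European-type payoff on $e^{X_\tau}$, then apply a damped Fourier transform (Carr--Madan style) approach. First I would write $\min\{c, R_k\} = c - (c - R_k)^+ = c - \bigl((1+c) - e^{X_\tau}\bigr)^+$, using $R_k \cong e^{X_\tau} - 1$ in distribution. This yields $\mathbb{E}_{\mathbb{Q}}[Z_k] = c - g/n - \mathbb{E}_{\mathbb{Q}}[f(X_\tau)]$, where $f(x) := \bigl((1+c) - e^x\bigr)^+$ is bounded by $1+c$ and supported on $\,]\!-\!\infty, \ln(1+c)]$.

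Since $f$ does not vanish as $x \to -\infty$, I would introduce the damping factor $e^{ax}$ with $a > 0$ so that $e^{ax} f(x) \in \mathcal{L}^1(\mathbb{R})$, and compute its Fourier transform (in the paper's convention) by elementary integration:
$$\hat{f}_a(y) = \int_{-\infty}^{\ln(1+c)} e^{(a+iy)x}\bigl((1+c) - e^x\bigr)\,dx = \frac{(1+c)^{1+a+iy}}{(a+iy)(1+a+iy)}.$$
Fourier inversion then yields $f(x) = (e^{-ax}/(2\pi))\int_{\mathbb{R}} \hat{f}_a(y)\, e^{-iyx}\,dy$. Substituting into $\mathbb{E}_{\mathbb{Q}}[f(X_\tau)] = \int_{\mathbb{R}} f(x)\, f_{X_\tau}(x)\,dx$ and interchanging integrals via Fubini gives
$$\mathbb{E}_{\mathbb{Q}}\bigl[f(X_\tau)\bigr] = \frac{1}{2\pi}\int_{\mathbb{R}} \hat{f}_a(y)\,\mathbb{E}_{\mathbb{Q}}\bigl[e^{-(a+iy)X_\tau}\bigr]\,dy = \frac{1}{2\pi}\int_{\mathbb{R}} \hat{f}_a(y)\,\phi_{X_\tau}(ia - y)\,dy,$$
where the last equality uses $-(a+iy) = i(ia - y)$. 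Combined with the reduction above, this produces \eqref{eq3.10}.

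To establish the explicit series representation \eqref{eq3.11} for $\phi_{X_\tau}(ia - y)$, I would start from the Poisson-mixture density \eqref{eq2.13} and exploit that the characteristic function of $\mathcal{N}(\mu',(\sigma')^2)$ equals $\exp\{iu\mu' - (\sigma')^2 u^2/2\}$. A term-by-term Fourier transform of \eqref{eq2.13} expresses $\phi_{X_\tau}(u)$ as a Poisson-weighted sum of Gaussian characteristic functions; substituting $u = ia - y$ and simplifying via $iu = -(a+iy)$ and $u^2 = -(a+iy)^2$ reproduces exactly the asserted formula.

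The main technical point is the justification of Fubini and of the term-by-term integration. The damping parameter $a > 0$ is essential on two fronts: it renders $e^{ax} f(x)$ integrable near $-\infty$, and the real part of the exponent in \eqref{eq3.11} behaves like $-\tfrac{1}{2}(\sigma^2\tau + m\delta^2)\, y^2$ as $|y|\to\infty$ (since $\mathfrak{Re}[(a+iy)^2] = a^2 - y^2$ and $\sigma > 0$), so $\phi_{X_\tau}(ia - y)$ decays in $y$ like a Gaussian. Combined with the $\mathcal{O}(1/y^2)$ decay of $\hat{f}_a(y)$ and the Poisson weights $(\lambda\tau)^m/m!$, absolute convergence is ensured throughout and all interchanges are legitimised.
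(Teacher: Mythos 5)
Your proposal is correct and follows essentially the same route as the paper's proof: the same decomposition $\min\{c,R_k\}=c-[c-R_k]^+$, the same damped function $e^{au}(c+1-e^u)^+$ with Fourier transform $\frac{(c+1)^{1+a+iy}}{(a+iy)(1+a+iy)}$, Fourier inversion plus Fubini for \eqref{eq3.10}, and term-by-term integration of the Poisson-mixture density \eqref{eq2.13} against $e^{-(a+iy)x}$ for \eqref{eq3.11}. Your added discussion of why $a>0$ and the Gaussian decay legitimise the interchanges is a welcome refinement the paper leaves implicit.
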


\begin{proof} First of all, verify that
%
\begin{align}
\label{eq3.12} \min \{ c, R_{k} \} =- \max \{ -c,- R_{k} \}
=c- \max \{ 0,c- R_{k} \} =c- [ c- R_{k} ]^{+}
\end{align}
which implies
\begin{align*}
\mathbb{E}_{\mathbb{Q}} [ Z_{k} ] =c- {g} / {n} -
\mathbb{E}_{\mathbb{Q}} \bigl[ ( c- R_{k} )^{+} \bigr].
\end{align*}
Hence, the evaluation of $\mathbb{E}_{\mathbb{Q}}  [ Z_{k}  ]$ is
equivalent to the evaluation of a put option with underlying $R_{k}$ and
strike $c\geq0$. Taking (\ref{eq2.10}) into account, we receive
\begin{align*}
\mathbb{E}_{\mathbb{Q}} [ Z_{k} ] =c- {g} / {n} -
\mathbb{E}_{\mathbb{Q}} \bigl[ \bigl( c+1- e^{X_{\tau}}
\bigr)^{+} \bigr]
\end{align*}
where $\tau= t_{k} - t_{k-1}$ and $X$ is the real-valued L\'{e}vy process
introduced in (\ref{eq2.3}). Furthermore, we define the function
\begin{align*}
\zeta ( u ) := e^{au} \bigl( c+1- e^{u}
\bigr)^{+}
\end{align*}
with a finite real-valued dampening\querymark{Q2} parameter $a>0$. Since $\zeta\in
\mathcal{L}^{1}  ( \mathbb{R}  )$, its Fourier transform exists
and reads as
\begin{align*}
\hat{\zeta} ( y ) = \frac{ ( c+1  )^{1+a+iy}} { (
a+iy  )  ( 1+a+iy  )}.
\end{align*}
Using the inverse Fourier transform along with Fubini's theorem, we get
\begin{align*}
\mathbb{E}_{\mathbb{Q}} \bigl[ \bigl( c+1- e^{X_{\tau}}
\bigr)^{+} \bigr] = \mathbb{E}_{\mathbb{Q}} \bigl[ e^{-a X_{\tau}}
\zeta ( X_{\tau} ) \bigr] = \frac{1} {2\pi} \int_{\mathbb{R}}
\hat{\zeta} ( y ) \mathbb{E}_{\mathbb{Q}} \bigl[ e^{-  ( a+iy  ) X_{\tau}} \bigr] \,dy
\end{align*}
which implies (\ref{eq3.10}). What remains is the computation of the characteristic
function $\phi_{X_{\tau}}$. It holds
\begin{align*}
\phi_{X_{\tau}} ( ia-y ) = \mathbb{E}_{\mathbb{Q}} \bigl[ e^{-
 ( a+iy  ) X_{\tau}}
\bigr] = \int_{\mathbb{R}} e^{-  ( a+iy
 ) x} f_{X_{\tau}} ( x ) \,dx,
\end{align*}
such that (\ref{eq2.13}) yields
\begin{align*}
\phi_{X_{\tau}} ( ia-y ) = e^{-\lambda\tau} \sum
_{m=0}^{\infty} \frac{ ( \lambda\tau  )^{m}} {m!} \int
_{\mathbb{R}} e^{-  (
a+iy  ) x} \varphi_{\gamma\tau+m\mu, \sigma^{2} \tau+m \delta^{2}} ( x ) \,dx.
\end{align*}
We finally perform the integration while noting that
%
\begin{align}
\label{eq3.13} \int_{\mathbb{R}} e^{bx}
\varphi_{\mu, \sigma^{2}} ( x ) \,dx = \exp \biggl\{ \mu b+ \frac{1} {2}
\sigma^{2} b^{2} \biggr\}
\end{align}
(with arbitrary $b \in \mathbb{C}$, $\mu \in \mathbb{R}$, $\sigma\in
\mathbb{R}^{+}$) and end up with (\ref{eq3.11}). \end{proof}

It is possible to derive an alternative representation for the
characteristic function $\phi_{X_{\tau}}$ by using (\ref{eq2.4}), (\ref{eq2.5}), Example~\ref{exmp2.2} and the equality $ ( ia-y  )^{2} =-  ( a+iy  )^{2}$.
If we do so, we obtain
\begin{align*}
\phi_{X_{\tau}} ( ia-y ) = \exp \biggl\{ \tau \!\biggl( - ( a+iy ) \gamma+
\frac{1} {2} \sigma^{2} ( a+iy )^{2} +\lambda
e^{ ( a+iy  )  [  ( a+iy  ) {\delta^{2}} /
{2} -\mu  ]} -\lambda \!\biggr) \!\biggr\}
\end{align*}
instead of (\ref{eq3.11}). In contrast to (\ref{eq3.11}), in the latter equation the series
expansion has vanished.

Our argumentation in the proof of Proposition~\ref{prop3.4} motivates the following
considerations.

\begin{prop}\label{prop3.5} {Suppose that} $Z_{k} = \min  \{ c,
R_{k}  \} - {g} / {n}$ {with} $k\in  \{ 1,\dots,n  \}$
{and} $c\geq0$. {Then the characteristic function of} $Z_{k}$
{under} $\mathbb{Q}$ {reads as}
%
\begin{align}
\label{eq3.14} \phi_{Z_{k}} ( x ) = e^{-ix {g} / {n}} \Biggl(
e^{ixc} + \int_{-\infty}^{\ln  ( 1+c  )} \bigl[
e^{ix  ( e^{u} -1
 )} - e^{ixc} \bigr] f_{X_{\tau}} ( u ) \,du \Biggr)
\end{align}
{where the density} $f_{X_{\tau}}$ of $X_{\tau}$ {under}
$\mathbb{Q}$ {is such as given in} (\ref{eq2.9}).
\end{prop}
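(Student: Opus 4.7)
The plan is to compute $\phi_{Z_k}(x) = \mathbb{E}_{\mathbb{Q}}[e^{ixZ_k}]$ directly by splitting the expectation according to the case distinction embedded in $\min\{c,R_k\}$. First I would peel off the deterministic additive constant $-g/n$, which contributes the prefactor $e^{-ixg/n}$, reducing the problem to computing $\mathbb{E}_{\mathbb{Q}}[e^{ix\min\{c,R_k\}}]$.

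Next I would perform the case distinction: on the event $\{R_k>c\}$ the minimum equals $c$, while on $\{R_k\le c\}$ it equals $R_k=e^{X_\tau}-1$, using (\ref{eq2.10}) together with the stationary-increments property $X_{t_k}-X_{t_{k-1}}\cong X_\tau$ already exploited in the proof of Proposition~\ref{prop2.9}. The event $\{R_k\le c\}$ is precisely $\{X_\tau\le\ln(1+c)\}$ because $u\mapsto e^u-1$ is strictly increasing. Splitting the expectation along this partition and expressing the contribution of the absolutely continuous part via the density $f_{X_\tau}$ given by (\ref{eq2.9}) yields
\begin{align*}
\mathbb{E}_{\mathbb{Q}}\bigl[e^{ix\min\{c,R_k\}}\bigr]
=e^{ixc}\,\mathbb{Q}\bigl(X_\tau>\ln(1+c)\bigr)
+\int_{-\infty}^{\ln(1+c)}e^{ix(e^u-1)}f_{X_\tau}(u)\,du.
\end{align*}

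Finally, rewriting $\mathbb{Q}(X_\tau>\ln(1+c))=1-\int_{-\infty}^{\ln(1+c)}f_{X_\tau}(u)\,du$ and merging the constant term $e^{ixc}$ into the integral produces the integrand $e^{ix(e^u-1)}-e^{ixc}$, delivering (\ref{eq3.14}) after reattaching the prefactor $e^{-ixg/n}$. There is no real obstacle beyond the bookkeeping: the rewriting of $\mathbb{Q}(X_\tau>\ln(1+c))$ in terms of $f_{X_\tau}$ is what allows the boundary term $e^{ixc}$ to be absorbed into the integral, and the identification $R_k\le c\Leftrightarrow X_\tau\le\ln(1+c)$ is the only substantive step. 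No integrability issue arises since the integrand on the bounded interval $(-\infty,\ln(1+c)]$ is dominated by $2f_{X_\tau}(u)$, which is integrable.
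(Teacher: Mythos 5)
Your argument is correct and follows essentially the same route as the paper: the paper reaches the identical case split via the identity $\min\{c,R_k\}=c-[c-R_k]^{+}$ and the indicator $[1+c-e^{u}]^{+}=(1+c-e^{u})\mathbh{1}_{u\leq\ln(1+c)}$, then absorbs the tail mass exactly as you do by writing $\mathbb{Q}(X_{\tau}>\ln(1+c))=1-\int_{-\infty}^{\ln(1+c)}f_{X_{\tau}}(u)\,du$. (Only a cosmetic slip: $(-\infty,\ln(1+c)]$ is not bounded, but your domination by $2f_{X_{\tau}}$ holds anyway.)
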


\begin{proof} By the definition of the characteristic function (recall
(\ref{eq2.4})), we get
\begin{align*}
\phi_{Z_{k}} ( x ) = e^{-ix {g} / {n}} \mathbb{E}_{\mathbb{Q}} \bigl[
e^{ix \min  \{ c, R_{k}  \}} \bigr].
\end{align*}
Taking (\ref{eq3.12}) and (\ref{eq2.10}) into account, the latter equation can be expressed as
\begin{align*}
\phi_{Z_{k}} ( x ) = e^{-ix {g} / {n}} \mathbb{E}_{\mathbb{Q}} \bigl[
e^{ix  ( c-  [ 1+c- e^{X_{\tau}}  ]^{+}  )} \bigr] = e^{-ix {g} / {n}} \int_{\mathbb{R}}
e^{ix  ( c-  [ 1+c-
e^{u}  ]^{+}  )} f_{X_{\tau}} ( u ) \,du
\end{align*}
where the density $f_{X_{\tau}}$ is such as given in (\ref{eq2.9}). Next, verify
that
\begin{align*}
\bigl[ 1+c- e^{u} \bigr]^{+} = \bigl( 1+c-
e^{u} \bigr) \mathbh{1}_{u\leq
\ln  ( 1+c  )}
\end{align*}
(where $\mathbh{1}$ denotes the indicator function) which implies
\begin{align*}
\phi_{Z_{k}} ( x ) = e^{-ix {g} / {n}} \Biggl( \int_{-\infty}^{\ln  ( 1+c  )}
e^{ix  ( e^{u} -1  )} f_{X_{\tau}} ( u ) \,du + e^{ixc} \int
_{\ln  ( 1+c
 )}^{\infty} f_{X_{\tau}} ( u ) \,du \Biggr).
\end{align*}
Since the last integral can be rewritten as
\begin{align*}
\int_{\ln  ( 1+c  )}^{\infty} f_{X_{\tau}} ( u ) \,du =1- \int
_{-\infty}^{\ln  ( 1+c  )} f_{X_{\tau}} ( u ) \,du
\end{align*}
we eventually obtain (\ref{eq3.14}). \end{proof}

There is an alternative method involving (\ref{eq3.9}) to derive an expression for
$\mathbb{E}_{\mathbb{Q}}  [ Z_{k}  ]$ which is presented in the following.

\begin{cor}\label{cor3.6} In the setup of Proposition~\ref{prop3.5},
we receive the representation
%
\begin{align}
\label{eq3.15} \mathbb{E}_{\mathbb{Q}} [ Z_{k} ] &=c-
\frac{g} {n} + e^{-\lambda\tau} \sum_{m=0}^{\infty}
\frac{ ( \lambda\tau  )^{m}}{
m!} \biggl[ \exp \biggl\{ \biggl( \gamma+ \frac{\sigma^{2}} {2}
\biggr) \tau+ \biggl( \mu+ \frac{\delta^{2}} {2} \biggr) m \biggr\} \varPhi \bigl(
\kappa_{1}^{m} \bigr)
\nonumber
\\
&\quad - ( 1+c ) \varPhi \bigl( \kappa_{2}^{m} \bigr)
\biggr]
\end{align}
{wherein} $\varPhi$ {denotes the standard normal distribution
function and}
\begin{align*}
\kappa_{1}^{m} := \frac{\ln  ( 1+c  ) -\gamma\tau-m\mu-
\sigma^{2} \tau-m \delta^{2}} {\sqrt{\sigma^{2} \tau+m \delta^{2}}},\qquad
\kappa_{2}^{m} := \frac{\ln  ( 1+c  ) -\gamma\tau-m\mu}{
\sqrt{\sigma^{2} \tau+m \delta^{2}}}.
\end{align*}
\end{cor}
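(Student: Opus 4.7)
The plan is to follow the alternative route explicitly suggested by the statement itself: use formula \eqref{eq3.9}, namely $\mathbb{E}_{\mathbb{Q}}[Z_k]=\frac{1}{i}\partial_x \phi_{Z_k}(x)|_{x=0}$, combined with the representation \eqref{eq3.14} of $\phi_{Z_k}$ derived in Proposition~\ref{prop3.5}. The first step is to differentiate \eqref{eq3.14} with respect to $x$ under the integral sign (which is permissible since the integrand is dominated by the integrable density $f_{X_\tau}$) and evaluate at $x=0$. The prefactor $e^{-ixg/n}$ contributes $-ig/n$ at $x=0$ while the bracket evaluates to $1$ there (since $e^{0}-e^{0}=0$ inside the integral); the derivative of the bracket contributes $ic + \int_{-\infty}^{\ln(1+c)}\bigl[i(e^u-1) - ic\bigr] f_{X_\tau}(u)\,du$. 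Dividing by $i$ and collecting terms produces the clean intermediate identity
\begin{equation*}
\mathbb{E}_{\mathbb{Q}}[Z_k] = c - \frac{g}{n} + \int_{-\infty}^{\ln(1+c)} \bigl[ e^u f_{X_\tau}(u) - (1+c) f_{X_\tau}(u) \bigr] \, du.
\end{equation*}

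The second step is to insert the series representation \eqref{eq2.9} of $f_{X_\tau}$ and exchange the sum and integral (uniform convergence of the Gaussian-weighted series makes this legitimate). This reduces the task to evaluating, for each $m \in \mathbb{N}_0$, the two Gaussian-type integrals
\begin{equation*}
I_1^m := \int_{-\infty}^{\ln(1+c)} e^u\, \varphi_{\gamma\tau+m\mu,\,\sigma^2\tau+m\delta^2}(u)\,du, \qquad I_2^m := \int_{-\infty}^{\ln(1+c)} \varphi_{\gamma\tau+m\mu,\,\sigma^2\tau+m\delta^2}(u)\,du.
\end{equation*}
The integral $I_2^m$ is by definition $\Phi(\kappa_2^m)$. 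For $I_1^m$, I would complete the square in the exponent: writing $a = \gamma\tau+m\mu$, $b^2 = \sigma^2\tau + m\delta^2$, one checks that $u - (u-a)^2/(2b^2) = -\bigl(u-(a+b^2)\bigr)^2/(2b^2) + a + b^2/2$, so the density with an $e^u$ factor becomes (up to the exponential prefactor $e^{a+b^2/2}$) a Gaussian density centered at $a+b^2$. The resulting integral is then exactly $e^{a+b^2/2}\,\Phi\bigl((\ln(1+c) - a - b^2)/b\bigr)$, which with the given substitutions collapses to $\exp\{(\gamma+\sigma^2/2)\tau + (\mu+\delta^2/2)m\}\Phi(\kappa_1^m)$.

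The third and final step is purely bookkeeping: substitute $I_1^m$ and $I_2^m$ back into the sum, keep the common factor $e^{-\lambda\tau}(\lambda\tau)^m/m!$, and identify the bracketed expression in \eqref{eq3.15}. No further manipulation is needed. The main obstacle, though minor, is the completion-of-the-square computation for $I_1^m$ and matching the shifted Gaussian argument exactly with the $\kappa_1^m$ defined in the statement; differentiation under the integral and the interchange of sum and integral are routine given the fast decay of the normal density and the factorial suppression of the Poisson weights.
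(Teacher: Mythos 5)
Your proposal is correct and follows essentially the same route as the paper: substituting \eqref{eq3.14} into \eqref{eq3.9} yields exactly the paper's intermediate identity \eqref{eq3.16}, and the two Gaussian integrals are then evaluated just as the paper does (your completion-of-the-square for $I_1^m$ simply makes explicit a step the paper leaves implicit). No substantive differences.
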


\begin{proof} A substitution of (\ref{eq3.14}) into (\ref{eq3.9}) yields
%
\begin{align}
\label{eq3.16} \mathbb{E}_{\mathbb{Q}} [ Z_{k} ] =c-
\frac{g} {n} + \int_{-\infty}^{\ln  ( 1+c  )} \bigl[
e^{u} -1-c \bigr] f_{X_{\tau}} ( u ) \,du.
\end{align}
Taking (\ref{eq2.13}) into account, we obtain the equalities
\begin{gather*}
\int_{-\infty}^{\ln  ( 1+c  )} e^{u} f_{X_{\tau}} (
u ) \,du = e^{ ( \gamma-\lambda+ \frac{\sigma^{2}} {2}  ) \tau} \sum_{m=0}^{\infty}
e^{ ( \mu+ {\frac{\delta^{2}} {2}}  ) m} \frac{ ( \lambda\tau  )^{m}} {m!} \varPhi \bigl( \kappa_{1}^{m}
\bigr),
\\
\int_{-\infty}^{\ln  ( 1+c  )} f_{X_{\tau}} ( u )
\,du = e^{-\lambda\tau} \sum_{m=0}^{\infty}
\frac{ ( \lambda\tau  )^{m}}{
m!} \varPhi \bigl( \kappa_{2}^{m} \bigr)
\end{gather*}
where the arguments $\kappa_{1}^{m}$ and $\kappa_{2}^{m}$ are such as
defined in the sequel of (\ref{eq3.15}). Putting the latter equations into (\ref{eq3.16}),
we ultimately get (\ref{eq3.15}). \end{proof}

Note that the expressions in (\ref{eq3.4}), (\ref{eq3.8}), (\ref{eq3.10}), (\ref{eq3.14}) and (\ref{eq3.15})
altogether are independent of $k$. This is not a surprising observation,
since $Z_{1},\dots, Z_{n}$ are i.i.d. random variables and we have chosen
equidistant resetting times with $\tau= t_{k} - t_{k-1}$.

Inspired by the Fourier transform techniques applied in the proof of
Proposition~\ref{prop3.4}, we now focus on the derivation of an alternative
representation for the cliquet option price $C_{0}$ given in Eq.~(\ref{eq3.1}).

\begin{thm}[Fourier transform cliquet option price]
\label{thm3.7} {Let}
$k\in  \{ 1,\dots,n  \}$ {and consider the independent and
identically distributed random variables} $Z_{k} \,{=}\, \min  \{ c, R_{k}
 \} - {g} / {n}$ {where} $c\geq0$ {is the local cap},
$g$ {is the guaranteed rate at maturity and} $R_{k}$ {is the
return process defined in} (\ref{eq2.10}). {For arbitrary} $n
\in \mathbb{N}$ {we set} $\varrho:=nc-g$ {and denote
the maturity time by} $T$, {the notional by} $K$ {and the
risk-less interest rate by} $r$. {Then the price at time zero of a
cliquet option paying}
\begin{align*}
H_{T} =K \Biggl( 1+g+ \max \Biggl\{ 0, \sum
_{k=1}^{n} Z_{k} \Biggr\} \Biggr)
\end{align*}
{at maturity can be represented as}
%
\begin{align}
\label{eq3.17} C_{0} &=K e^{-rT} \Biggl[ 1+g
\nonumber
\\
&\quad + \int_{0^{ +}}^{\infty} \frac{1+iy\varrho-
e^{iy\varrho}} {2\pi y^{2}} \Biggl(
1+ \int_{-\infty}^{\ln  ( 1+c
 )} \bigl[ e^{iy  ( e^{u} -1-c  )} -1 \bigr]
f_{X_{\tau}} ( u ) \,du \Biggr)^{n} \,dy \Biggr]
\end{align}
{where} $f_{X_{\tau}}  ( u  )$ {constitutes the
probability density function claimed in} (\ref{eq2.9}).
\end{thm}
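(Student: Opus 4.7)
The plan is to start from (\ref{eq3.1}), which reduces the problem to evaluating $\mathbb{E}_{\mathbb{Q}}[\max\{0,S\}]$ with $S:=\sum_{k=1}^{n} Z_{k}$. The key structural observation is that $Z_{k}=\min\{c,R_{k}\}-g/n\le c-g/n$, hence $S\le nc-g=\varrho$ $\mathbb{Q}$-almost surely. Setting $V:=\varrho - S\ge 0$, I would recast the task as computing $\mathbb{E}_{\mathbb{Q}}[(\varrho - V)^{+}]$, i.e.\ a one-sided integral against the law of $V$.

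I would then identify a compactly supported auxiliary function whose Fourier transform equals the kernel in (\ref{eq3.17}). Take $f(v):=(\varrho - v)\mathbh{1}_{[0,\varrho]}(v)\in\mathcal{L}^{1}(\mathbb{R})\cap\mathcal{L}^{2}(\mathbb{R})$. Since $V\ge 0$, one has $f(V)=(\varrho - V)^{+}$ $\mathbb{Q}$-a.s., and a direct double integration by parts produces
\begin{align*}
\hat f(y)=\int_{0}^{\varrho}(\varrho - v)e^{iyv}\,dv=\frac{1+iy\varrho - e^{iy\varrho}}{y^{2}},
\end{align*}
which is precisely the kernel of (\ref{eq3.17}). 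The function $\hat f$ is continuous at $y=0$ (limiting value $\varrho^{2}/2$) and decays like $y^{-2}$ at infinity, so $\hat f\in\mathcal{L}^{1}(\mathbb{R})$, and Fourier inversion combined with Fubini gives
\begin{align*}
\mathbb{E}_{\mathbb{Q}}\bigl[(\varrho - V)^{+}\bigr]=\mathbb{E}_{\mathbb{Q}}[f(V)]=\frac{1}{2\pi}\int_{\mathbb{R}}\hat f(y)\,\phi_{V}(-y)\,dy.
\end{align*}

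The remaining ingredient is $\phi_{V}(-y)$. Introducing $W_{k}:=Z_{k}-c+g/n=\min\{c,R_{k}\}-c\le 0$, I observe that $\sum_{k=1}^{n}W_{k}=S-\varrho=-V$ and that the $W_{k}$ are $\mathbb{Q}$-i.i.d., so $\phi_{V}(-y)=(\phi_{W_{1}}(y))^{n}$. A case distinction on $\{R_{1}\le c\}=\{X_{\tau}\le\ln(1+c)\}$, combined with $R_{1}=e^{X_{\tau}}-1$ and the density representation (\ref{eq2.9}), then yields
\begin{align*}
\phi_{W_{1}}(y)=1+\int_{-\infty}^{\ln(1+c)}\bigl[e^{iy(e^{u}-1-c)}-1\bigr]f_{X_{\tau}}(u)\,du,
\end{align*}
which is precisely the inner bracket of (\ref{eq3.17}). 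Substituting this back into (\ref{eq3.1}) and exploiting the Hermitian symmetry $\hat f(-y)\phi_{V}(y)=\overline{\hat f(y)\phi_{V}(-y)}$ to fold the integral over $\mathbb{R}$ onto $(0,\infty)$ delivers (\ref{eq3.17}).

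The chief technical obstacle I expect is this last folding step: one must carefully account for the factor of $2$ (and implicit real part) arising when passing from a conjugate-symmetric integrand on $\mathbb{R}$ to one on $(0,\infty)$, and verify Fubini rigorously via the $\mathcal{L}^{1}$-decay of $\hat f$ together with $|\phi_{V}(-y)|\le 1$. Everything else is routine bookkeeping in the spirit of the proofs of Proposition~\ref{prop3.4} and Proposition~\ref{prop3.5}.
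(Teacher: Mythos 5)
Your proposal is correct and takes essentially the same route as the paper: your $V=\varrho-\sum_{k}Z_{k}$ is identically the paper's $D=\sum_{k}[c-R_{k}]^{+}$, your auxiliary function $f$ and its transform $\hat f$ are the paper's $\vartheta$ and $\hat\vartheta$, and your $\phi_{W_{1}}(y)$ is the paper's factor $\mathbb{E}_{\mathbb{Q}}[e^{-iy[c-R_{k}]^{+}}]$ obtained by the same case distinction on $\{X_{\tau}\le\ln(1+c)\}$. The one place you diverge is the final folding step, where you are in fact more careful than the printed proof: Hermitian symmetry gives $\frac{1}{\pi}\int_{0}^{\infty}\mathfrak{Re}\lleft(\hat f(y)\,\phi_{V}(-y)\rrvert_{\phantom{x}}\right)dy$ rather than the literal $\frac{1}{2\pi}\int_{0^{+}}^{\infty}\hat f(y)\phi_{V}(-y)\,dy$ appearing in (\ref{eq3.17}), whereas the paper simply restricts the inversion integral to $\mathbb{R}^{+}$ without the factor $2$ or the real part.
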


\begin{proof} Suppose that the cliquet option price $C_{0}$ is such as
given in (\ref{eq3.1}). We only need to evaluate the expectation
\begin{align*}
J:= \mathbb{E}_{\mathbb{Q}} \Biggl[ \Biggl( \sum_{k=1}^{n}
Z_{k} \Biggr)^{+} \Biggr]
\end{align*}
appearing in (\ref{eq3.1}). Taking the definition of $Z_{k}$ and (\ref{eq3.12}) into
account, we obtain
\begin{align*}
J= \mathbb{E}_{\mathbb{Q}} \Biggl[ \Biggl( \varrho- \sum
_{k=1}^{n} [ c- R_{k} ]^{+}
\Biggr)^{+} \Biggr]
\end{align*}
where $\varrho=nc-g$ is a constant. Note that in the latter equation we
observe a basket-style composition of put options. Let us further introduce
the function $\vartheta  ( x  ) :=  ( \varrho-x
 )^{+} \in \mathcal{L}^{1}  ( \mathbb{R}^{+}  )$ as well as
the non-negative random variable
\begin{align*}
D:= \sum_{k=1}^{n} [ c- R_{k}
]^{+}
\end{align*}
such that we may write
\begin{align*}
J= \mathbb{E}_{\mathbb{Q}} \bigl[ \vartheta ( D ) \bigr].
\end{align*}
An application of the inverse Fourier transform yields
\begin{align*}
J= \frac{1} {2\pi} \int_{\mathbb{R}^{+}} \hat{\vartheta} ( y )
\mathbb{E}_{\mathbb{Q}} \bigl[ e^{-iyD} \bigr] \,dy
\end{align*}
where
\begin{align*}
\hat{\vartheta} ( y ) = \frac{1+iy\varrho- e^{iy\varrho}}{
y^{2}}
\end{align*}
constitutes the Fourier transform of $\vartheta$. In the next step, we
compute the characteristic function of $D$. Taking the definition of $D$
and the $\mathbb{Q}$-independence of the random variables $R_{1},\dots,
R_{n}$ into account, we deduce
\begin{align*}
\mathbb{E}_{\mathbb{Q}} \bigl[ e^{-iyD} \bigr] = \prod
_{k=1}^{n} \mathbb{E}_{\mathbb{Q}} \bigl[
e^{-iy  [ c- R_{k}  ]^{+}} \bigr].
\end{align*}
With respect to (\ref{eq2.10}), we obtain
\begin{align*}
\mathbb{E}_{\mathbb{Q}} \bigl[ e^{-iyD} \bigr] = \prod
_{k=1}^{n} \mathbb{E}_{\mathbb{Q}} \bigl[
e^{-iy  [ 1+c- e^{X_{\tau}}  ]^{+}} \bigr] = \prod_{k=1}^{n}
\int_{\mathbb{R}} e^{-iy  [ 1+c- e^{u}
 ]^{+}} f_{X_{\tau}} ( u ) \,du
\end{align*}
where $\tau= t_{k} - t_{k-1}$ and $f_{X_{\tau}}  ( u  )$ is such
as given in (\ref{eq2.9}). By a case distinction, we get
\begin{align*}
\mathbb{E}_{\mathbb{Q}} \bigl[ e^{-iyD} \bigr] &= \prod
_{k=1}^{n} \Biggl( \int_{-\infty}^{\ln  ( 1+c  )}
e^{-iy  ( 1+c- e^{u}  )} f_{X_{\tau}} ( u ) \,du + \int_{\ln  ( 1+c  )}^{\infty}
f_{X_{\tau}} ( u ) \,du \Biggr)
\\
& = \prod_{k=1}^{n} \Biggl( 1+ \int
_{-\infty}^{\ln  ( 1+c  )} \bigl[ e^{iy  ( e^{u} -1-c
 )} -1 \bigr]
f_{X_{\tau}} ( u ) \,du \Biggr).
\end{align*}

Verify that the emerging integrand $ [ e^{iy  ( e^{u} -1-c  )}
-1  ] f_{X_{\tau}}  ( u  )$ is finite for all $u\in  [
-\infty, \ln  ( 1+c  )  ]$. Also note that the appearing
factors altogether are independent of~$k$. Combining the latter equations
with (\ref{eq3.1}), we finally get the asserted cliquet option price formula
(\ref{eq3.17}). \end{proof}

We recall that Fourier transform techniques have also been applied in the
context of cliquet option pricing in e.g. \cite{9} and \cite{11}.

\section{Hedging and Greeks}\label{sec4}

In this section, we are concerned with sensitivity analysis and the
computation of Greeks in our cliquet option pricing context. Let us start
with an investigation of the Greek \textit{Rho} which is defined as the
derivative of the cliquet option price $C_{0}$ with respect to the interest
rate $r$. Due to (\ref{eq3.2}), respectively (\ref{eq3.17}), we find
\begin{align*}
\rho:= \frac{\partial C_{0}} {\partial r} =-T C_{0}
\end{align*}
where $T$ denotes the maturity time of the option. Further on, both the
\textit{Delta} and \textit{Gamma} of the cliquet option vanish, i.e.
\begin{align*}
\Delta := \frac{\partial C_{0}} {\partial S_{0}} =0,\qquad
 \varGamma := \frac{\partial^{2} C_{0}} {\partial S_{0}^{2}} =0
\end{align*}
since we assumed $t_{0} \neq0$ in (\ref{eq2.10}) such that neither $R_{1}$ nor
$Z_{1}$ contains $S_{0}$. In accordance to Section 3.2 in \cite{3}, we claim
that in any cliquet option pricing context the most important Greek to
study is the \textit{Vega} which is defined as
\begin{align*}
V:= \frac{\partial C_{0}} {\partial\sigma}
\end{align*}
where $\sigma>0$ denotes the volatility parameter of the L\'{e}vy process
$X$ defined in (\ref{eq2.3}). In the Fourier transform framework studied in Section~\ref{sec3.2}, we get the following result.

\begin{prop}[Vega; Fourier transform case]\label{prop4.1} {Presume
that the density function} $f_{X_{\tau}}$ {is such as given in}
(\ref{eq2.13}) {and denote the density function of the normal
distribution by} $\varphi$. In the setup of Theorem~\ref{thm3.7},
we then find the following expression for the Vega of the cliquet option
%
\begin{align}
\label{eq4.1} V= \frac{n K e^{-rT}} {2\pi} \int_{0^{ +}}^{\infty}
\frac{1+iy  ( nc-g
 ) - e^{iy  ( nc-g  )}} {y^{2}} F_{y} ( \sigma )^{n-1} F_{y}
' ( \sigma ) \,dy
\end{align}
{where}
\begin{gather*}
F_{y} ( \sigma ) :=1+ \int_{-\infty}^{\ln  ( 1+c
 )}
\bigl[ e^{iy  ( e^{u} -1-c  )} -1 \bigr] f_{X_{\tau}} ( u ) \,du,
\\
F_{y} ' ( \sigma ) =\sigma \tau
e^{-\lambda\tau} \sum_{m=0}^{\infty}
\frac{ ( \lambda\tau  )^{m}} {m!  (
\sigma^{2} \tau+m \delta^{2}  )^{2}} \int_{-\infty}^{\ln  ( 1+c
 )}
\varphi_{\gamma\tau+m\mu, \sigma^{2} \tau+m \delta^{2}} ( u ) G_{y} ( u ) \,du,
\\
G_{y} ( u ) := \bigl[ e^{iy  ( e^{u} -1-c  )} -1 \bigr] \bigl[ (
u-\gamma\tau-m\mu )^{2} - \sigma^{2} \tau-m
\delta^{2} \bigr].
\end{gather*}
\end{prop}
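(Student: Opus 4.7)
The plan is to take Theorem 3.7 as the starting point and differentiate the price formula (3.17) with respect to $\sigma$ under the integral sign. First I would rewrite (3.17) compactly as
\begin{align*}
C_{0} = K e^{-rT}\Biggl[1 + g + \int_{0^{+}}^{\infty}\frac{1 + iy\varrho - e^{iy\varrho}}{2\pi y^{2}}\, F_{y}(\sigma)^{n}\,dy\Biggr],
\end{align*}
where $F_{y}(\sigma)$ is exactly the factor raised to the $n$-th power in (3.17). Since the characteristic function $\varrho = nc - g$ and the leading factor $(1+iy\varrho - e^{iy\varrho})/y^{2}$ do not depend on $\sigma$, a formal application of Leibniz's rule together with the chain rule produces
\begin{align*}
V = K e^{-rT}\int_{0^{+}}^{\infty}\frac{1 + iy\varrho - e^{iy\varrho}}{2\pi y^{2}}\cdot n\,F_{y}(\sigma)^{n-1}F_{y}'(\sigma)\,dy,
\end{align*}
which already matches the shape of (4.1) modulo the identification of $F_{y}'(\sigma)$.

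The second step is to compute $F_{y}'(\sigma)$. Using the series representation (2.13) for $f_{X_{\tau}}$, one may interchange the $\sigma$-derivative with the series $\sum_{m=0}^{\infty}(\lambda\tau)^{m}/m!$ and with the $u$-integral on $\,]{-}\infty,\ln(1+c)\,]$. Thus everything reduces to computing $\partial_{\sigma}\varphi_{\gamma\tau+m\mu,\,\sigma^{2}\tau+m\delta^{2}}(u)$. Writing $v := \sigma^{2}\tau + m\delta^{2}$ and using the standard identity
\begin{align*}
\frac{\partial}{\partial v}\varphi_{a,v}(u) = \varphi_{a,v}(u)\cdot\frac{(u-a)^{2} - v}{2v^{2}},
\end{align*}
combined with $\partial v/\partial\sigma = 2\sigma\tau$, yields
\begin{align*}
\frac{\partial}{\partial\sigma}\varphi_{\gamma\tau+m\mu,\,v}(u) = \varphi_{\gamma\tau+m\mu,\,v}(u)\cdot\frac{\sigma\tau\bigl[(u-\gamma\tau - m\mu)^{2} - \sigma^{2}\tau - m\delta^{2}\bigr]}{(\sigma^{2}\tau + m\delta^{2})^{2}}.
\end{align*}
Plugging this into the differentiated series gives precisely the asserted formula for $F_{y}'(\sigma)$ with the function $G_{y}(u)$ as stated.

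The main technical obstacle is the justification of the two interchanges of derivative with integration/summation. For the outer $y$-integral one notes that $|F_{y}(\sigma)|\le 1$ (since the inner integrand is an integral of $e^{iy(e^{u}-1-c)}-1$ against a probability density, hence bounded), and the prefactor $(1+iy\varrho - e^{iy\varrho})/y^{2}$ is integrable near $0$ (by Taylor expansion it behaves like $\varrho^{2}/2$) and decays like $1/y^{2}$ at infinity, so a dominated convergence argument applies. For the inner sum-integral one bounds $|G_{y}(u)|$ by a polynomial in $u$, which, when weighted by the normal density and summed against the rapidly decaying Poisson weights $(\lambda\tau)^{m}/m!$, produces a summable majorant uniform in $\sigma$ on compact subsets of $(0,\infty)$. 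Once these dominations are in place, assembling the chain rule gives (4.1) directly.
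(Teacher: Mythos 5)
Your proposal follows essentially the same route as the paper's proof: differentiate (\ref{eq3.17}) under the integral sign, apply the chain rule to the $n$-th power, and compute $\partial_{\sigma} f_{X_{\tau}}$ termwise from the series (\ref{eq2.13}) via the standard identity for $\partial_{v}\varphi_{a,v}$, which yields exactly the stated $G_{y}$. The only difference is that you additionally sketch the dominated-convergence justification for the interchanges, which the paper takes for granted; this is a welcome but inessential supplement.
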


\begin{proof} First of all, note that the only ingredient in (\ref{eq3.17}) which
contains the parameter $\sigma$ is the density function $f_{X_{\tau}}$.
Thus, from (\ref{eq3.17}) we deduce
\begin{align*}
V= \frac{n K e^{-rT}} {2\pi} \int_{0^{ +}}^{\infty}
\frac{1+iy\varrho-
e^{iy\varrho}} {y^{2}} F_{y} ( \sigma )^{n-1} F_{y}
' ( \sigma ) \,dy
\end{align*}
where $F_{y}  ( \sigma  )$ is as defined in the proposition and
the derivative $F_{y} '  ( \sigma  ) := {\partial F_{y}
 ( \sigma  )} / {\partial\sigma}$ reads as
\begin{align*}
F_{y} ' ( \sigma ) = \int_{-\infty}^{\ln  ( 1+c  )}
\bigl[ e^{iy  ( e^{u} -1-c  )} -1 \bigr] \frac{\partial
f_{X_{\tau}}  ( u  )} {\partial\sigma} \,du.
\end{align*}
Taking (\ref{eq2.13}) into account, we find
\begin{align*}
\frac{\partial f_{X_{\tau}}  ( u  )} {\partial\sigma} =\sigma \tau e^{-\lambda\tau} \sum
_{m=0}^{\infty} \frac{ ( \lambda\tau  )^{m}}{
m!} \frac{ ( u-\gamma\tau-m\mu  )^{2} - \sigma^{2} \tau-m
\delta^{2}} { ( \sigma^{2} \tau+m \delta^{2}  )^{2}}
\varphi_{\gamma\tau+m\mu, \sigma^{2} \tau+m \delta^{2}} ( u ).
\end{align*}
Putting the latter equations together, we obtain (\ref{eq4.1}) which completes the
proof. \end{proof}

In the distribution function context studied in Section~\ref{sec3.1}, we find the
following expression for the Vega.

\begin{prop}[Vega; distribution function case]
\label{prop4.2}
{Let us denote by} $\varphi_{0,1} = \varPhi'$ the probability density function
of the standard normal distribution. Then, in the setup of
Proposition~\ref{prop3.1}, we get the following representation for the Vega
%
\begin{align}
\label{eq4.2} V&=n \tau \sigma K e^{-rT-\lambda\tau} \sum
_{m=0}^{\infty} \frac{ (
\lambda\tau  )^{m}} {m!}
\nonumber
\\
&\quad \times \int_{0^{ +}}^{1+c} \varPsi ( w ) \Biggl[
\frac{1} {2} - \frac{1} {\pi} \int_{0^{ +}}^{\infty}
\frac{\mathfrak{Re}  (
i e^{ix  [ w-1- {g} / {n}  ]} \phi_{Z_{1}}  ( x  )^{n-1}
 )} {x} \,dx \Biggr] \,dw
\end{align}
{where the characteristic function} $\phi_{Z_{1}}  ( x  )$
{is such as given in} (\ref{eq3.4}) {and}
\begin{align*}
\varPsi ( w ) := \varphi_{0,1} \biggl( \frac{\ln  ( w
 ) -\gamma\tau-m\mu} {\sqrt{\sigma^{2} \tau+m \delta^{2}}} \biggr)
\frac{\ln  ( w  ) -\gamma\tau-m\mu} {\sqrt{ ( \sigma^{2}
\tau+m \delta^{2}  )^{3}}}.
\end{align*}
\end{prop}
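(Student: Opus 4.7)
The plan is to differentiate the cliquet option price formula (\ref{eq3.2}) with respect to $\sigma$, using the $\sigma$-dependence that enters through Propositions~\ref{prop3.2} and \ref{prop3.3} (i.e. via $\mathbb{E}_{\mathbb{Q}}[Z_1]$ and $\phi_Z(x)=\phi_{Z_1}(x)^n$). Rewriting (\ref{eq3.2}) in the form
\[
C_0 = K e^{-rT}\Bigl(1+g+\tfrac{n}{2}\mathbb{E}_{\mathbb{Q}}[Z_1]+\tfrac{1}{\pi}\int_{0^+}^{\infty}\tfrac{1-\mathfrak{Re}(\phi_{Z_1}(x)^n)}{x^2}\,dx\Bigr),
\]
I would assume sufficient regularity to exchange $\partial_\sigma$ with the integrals and series and treat the two contributions separately.

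First I would compute the key derivative that feeds both pieces. Writing $b_m(\sigma):=\sqrt{\sigma^2\tau+m\delta^2}$ and $a_m(w):=\ln(w)-\gamma\tau-m\mu$, one finds $\partial_\sigma(a_m/b_m)=-a_m\sigma\tau/b_m^3$, so
\[
\frac{\partial}{\partial\sigma}\varPhi\!\left(\frac{a_m(w)}{b_m(\sigma)}\right) = -\sigma\tau\,\varphi_{0,1}\!\left(\tfrac{a_m(w)}{b_m(\sigma)}\right)\frac{a_m(w)}{b_m(\sigma)^3} = -\sigma\tau\,\varPsi(w),
\]
which is exactly the function $\varPsi$ appearing in the statement. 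Differentiating (\ref{eq3.8}) then yields
\[
\frac{\partial \mathbb{E}_{\mathbb{Q}}[Z_1]}{\partial\sigma} = \sigma\tau e^{-\lambda\tau}\sum_{m=0}^{\infty}\frac{(\lambda\tau)^m}{m!}\int_{0^+}^{1+c}\varPsi(w)\,dw,
\]
and differentiating (\ref{eq3.4}) gives
\[
\frac{\partial\phi_{Z_1}(x)}{\partial\sigma} = ix\sigma\tau e^{-\lambda\tau}\sum_{m=0}^{\infty}\frac{(\lambda\tau)^m}{m!}\int_{0^+}^{1+c}e^{ix(w-1-g/n)}\varPsi(w)\,dw.
\]

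Next, I would apply the chain rule to $\phi_{Z_1}(x)^n$ to obtain $n\phi_{Z_1}(x)^{n-1}\partial_\sigma\phi_{Z_1}(x)$, plug this into $\partial_\sigma C_0$, and pull the factor $ix$ inside to combine with $1/x^2$, so that a single power of $x$ remains in the denominator. Taking the real part and invoking Fubini to interchange the $x$- and $w$-integrals converts the contribution of the Fourier integral into
\[
-\tfrac{n\sigma\tau}{\pi}e^{-\lambda\tau}\!\sum_{m}\tfrac{(\lambda\tau)^m}{m!}\!\int_{0^+}^{1+c}\!\varPsi(w)\!\int_{0^+}^{\infty}\!\tfrac{\mathfrak{Re}\bigl(i e^{ix[w-1-g/n]}\phi_{Z_1}(x)^{n-1}\bigr)}{x}\,dx\,dw.
\]
Adding this to the $\tfrac{n}{2}\partial_\sigma\mathbb{E}_{\mathbb{Q}}[Z_1]$ term and pulling out the common $w$-integral against $\varPsi(w)$ produces exactly the bracket $\bigl[\tfrac{1}{2}-\tfrac{1}{\pi}\int_{0^+}^{\infty}\cdots\bigr]$ in (\ref{eq4.1}).

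The main obstacle is bookkeeping rather than any deep step: keeping track of the factor $ix$ that is introduced by $\partial_\sigma\phi_{Z_1}$, correctly assigning it to the real part, and justifying the Fubini exchange and the differentiation under the integral/sum. The decay of $\varphi_{0,1}$ in $\varPsi(w)$ on the bounded $w$-interval and the $O(1/x)$ behaviour of the $x$-integrand (obtained after the $ix$ cancels one power) make the integrability bookkeeping straightforward, so I expect no serious analytic difficulty once the algebraic reorganisation is done.
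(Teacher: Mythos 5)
Your proposal is correct and follows essentially the same route as the paper: differentiate (\ref{eq3.2}) under the integral, compute $\partial_\sigma\mathbb{E}_{\mathbb{Q}}[Z_1]$ from (\ref{eq3.8}) and $\partial_\sigma\phi_{Z_1}$ from (\ref{eq3.4}) via the chain rule (both producing the factor $-\sigma\tau\varPsi(w)$ inside the $w$-integral), then use $\partial_\sigma\phi_Z=n\phi_{Z_1}^{n-1}\partial_\sigma\phi_{Z_1}$, cancel the $ix$ against $x^{-2}$, and interchange the $x$- and $w$-integrals to assemble the bracket in (\ref{eq4.2}). The sign and bookkeeping of the $ix$ factor in your sketch match the paper's computation exactly.
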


\begin{proof} Taking (\ref{eq3.2}) into account, we get
\begin{align*}
V=K e^{-rT} \Biggl( \frac{n} {2} \frac{\partial \mathbb{E}_{\mathbb{Q}}
 [ Z_{1}  ]} {\partial\sigma} -
\frac{1} {\pi} \int_{0^{
+}}^{\infty} \mathfrak{Re}
\biggl( \frac{\partial \phi_{Z}  ( x  )}{
\partial\sigma} \biggr) x^{-2} \,dx \Biggr).
\end{align*}
Using (\ref{eq3.8}) and the ordinary chain rule, we obtain
\begin{align*}
\frac{\partial \mathbb{E}_{\mathbb{Q}}  [ Z_{1}  ]}{
\partial\sigma} &=\tau \sigma e^{-\lambda\tau} \sum
_{m=0}^{\infty} \frac{ ( \lambda\tau  )^{m}} {m!}
\\
&\quad \times \int
_{0^{ +}}^{1+c} \varphi_{0,1} \biggl(
\frac{\ln  ( w  ) -\gamma\tau-m\mu}{
\sqrt{\sigma^{2} \tau+m \delta^{2}}} \biggr) \frac{\ln  ( w  )
-\gamma\tau-m\mu} {\sqrt{ ( \sigma^{2} \tau+m \delta^{2}  )^{3}}} \,dw
\end{align*}
where $\varphi_{0,1} = \varPhi'$ denotes the probability density function of
the standard normal distribution. On the other hand, by (\ref{eq3.3}) and (\ref{eq3.4}) we
deduce
\begin{align*}
\frac{\partial} {\partial\sigma} \phi_{Z} ( x ) &=n \phi_{Z_{1}} ( x
)^{n-1} \frac{\partial} {\partial\sigma} \phi_{Z_{1}} ( x )
\\
& = %
\tau \sigma n i x \phi_{Z_{1}} ( x )^{n-1}
e^{-ix  ( 1+ {g}
/ {n}  ) -\lambda\tau} \sum_{m=0}^{\infty}
\frac{ ( \lambda\tau
 )^{m}} {m!}
\\
&\quad \times \int_{0^{ +}}^{1+c} e^{ixw}
\varphi_{0,1} \biggl( \frac{\ln  ( w  ) -\gamma\tau-m\mu} {\sqrt{\sigma^{2} \tau+m
\delta^{2}}} \biggr) \frac{\ln  ( w  ) -\gamma\tau-m\mu}{
\sqrt{ ( \sigma^{2} \tau+m \delta^{2}  )^{3}}} \,dw.
\end{align*}
Putting the latter equations together, we end up with the asserted
representation~(\ref{eq4.2}). \end{proof}

\section{Conclusions}\label{sec5}

In this paper, we investigated the pricing of a monthly sum cap style
cliquet option with underlying stock price modeled by a jump-diffusion
L\'{e}vy process with compound Poisson jumps. In Section~\ref{sec2}, we derived
representations for the probability density and distribution function of
the involved L\'{e}vy process. Moreover, we obtained semi-analytic
expressions for the cliquet option price by using the probability
distribution function of the driving L\'{e}vy process in Section~\ref{sec3.1} and by
an application of Fourier transform techniques in Section~\ref{sec3.2}. With view on
existing literature, the main contribution of the paper consists of the
Fourier transform cliquet option price formula provided in Theorem~\ref{thm3.7}. In
Section~\ref{sec4}, we concentrated on sensitivity analysis and computed the Greeks
Rho, Delta, Gamma and Vega.

A future research topic might consist in a transformation of the presented
techniques and results to a time-inhomogeneous L\'{e}vy process setup
which, in particular, contains a time (and state) dependent volatility
coefficient $\sigma  ( t  )$, respectively $\sigma  ( t, X_{t}
 )$, in order to obtain more realistic (implied) volatility surfaces.
In this context, we refer to Section 4 in \cite{16} as well as Sections 1.2.1
and 11.1.2 in \cite{4}.

To read more on cliquet option pricing in a pure-jump Meixner--L\'{e}vy
process setup, the reader is referred to the accompanying article \cite{8}.





\end{document}